\documentclass[journal]{IEEEtran}
\IEEEoverridecommandlockouts                          
\usepackage[english]{babel}
\usepackage{amsmath, amssymb, amsfonts, amsthm, bm}
\usepackage{mathrsfs}
\usepackage[mathcal]{euscript}
\usepackage{enumerate}
\usepackage{mathbbol}
\usepackage{float}
\usepackage{color,graphicx,epstopdf}
\usepackage{dsfont}
\usepackage{hyperref}
\usepackage{caption}
\usepackage{subcaption}
\usepackage{cite} 
\usepackage{algorithm} 
\usepackage{algorithmic} 
\usepackage{tikz-cd}
\usepackage{mathtools}
\usepackage{lipsum}

\newcommand*\mcapinn[2]{\vcenter{\hbox{$\mathsurround=0pt
			\ifx\displaystyle#1\textstyle\else#1\fi\bigcap$}}}
\newcommand*\mcup{\mathbin{\mathpalette\mcupinn\relax}}
\newcommand*\mcupinn[2]{\vcenter{\hbox{$\mathsurround=0pt
			\ifx\displaystyle#1\textstyle\else#1\fi\bigcup$}}}

\newtheorem{theorem}{Theorem}
\newtheorem{definition}{Definition}

\newtheorem{lemma}{Lemma}

\newtheorem{proposition}{Proposition}

\title{Social Shaping for Transactive Energy Systems}

\author{Zeinab Salehi, Yijun Chen, Ian R. Petersen, Elizabeth L. Ratnam, Guodong Shi}

\begin{document}
\maketitle

\begin{abstract}
This paper considers the problem of shaping agent utility functions in a transactive energy system to ensure the optimal energy price at a competitive equilibrium is always socially acceptable, that is, below a prescribed threshold. Agents in a distributed energy system aim to maximize their individual payoffs, as a combination of the utility of energy consumption and the income/expenditure from energy exchange. The utility function of each agent is parameterized  by individual preference vectors, with the overall system operating at competitive equilibriums. We show the social shaping problem of the proposed transactive energy system is conceptually captured by a set decision problem. The set of agent preferences that guarantees a socially acceptable price  is characterized by an implicit algebraic equation for strictly concave and continuously differentiable utility functions. We also present two analytical solutions where  tight ranges for the coefficients of   linear-quadratic utilities and piece-wise linear utilities are established under which optimal pricing is proven to be always socially acceptable. 
\end{abstract}


\section{Introduction}
 
Recent dramatic increases in rooftop solar, backed by energy storage technologies including electrical vehicles and home batteries, are obfuscating the traditional boundary between energy producer and consumer \cite{Lin2019Comparative}. At the same time, transactive energy systems are being designed to coordinate supply and demand for various distributed energy sources in an electrical power network \cite{Wang2018Transactive}, where power lines (or cables) enable electricity exchanges, and communication channels enable cyber information exchanges. With transactive energy systems supporting the rise of the prosumer (i.e., an electricity producer and consumer), new ways are emerging to enable a fair \cite{Brockway2021Inequitable} and sustainable renewable energy transition \cite{Anastopoulou2021Efficient}.

In the recent literature, transactive energy systems have been designed for a wide-range of power system applications. For example, transactive energy systems are supporting microgrid operations \cite{Akter2020optimal}, virtual power plant operations \cite{BEHBOODI2018Transactive}, and the operation of bulk power systems with a proliferation of renewable and distributed energy resources \cite{Werner2021Pricing}. Such transactive energy systems are primarily focused on market-based approaches for balancing electricity supply and demand, supporting robust frequency regulation throughout the bulk grid \cite{King2021Solving}. Typically, agents operating in transactive energy subsystem (e.g., virtual power station) are located across a power network and market mechanisms are in place to support individual preferences, enabling agents to compete and collaborate with each other \cite{Pierluigi2019Survey}. The aim is a dramatic improvement in power systems operational efficiency, scalability, and resilience. The market mechanisms used in transactive energy networks to enable efficient and valuable energy transactions are typically drawn from classical theories in economics and game theory \cite{Li2020}. 

In standard welfare economics theory \cite{Debreu1952} it is suggested that resources pricing can be designed to balance supply and demand in a market. In a multi-agent system with distributed resource allocations, agents have the autonomy to decide their local resource consumption and exchange to optimize their individual payoffs as a combination of local utility and income or expenditure. In a competitive equilibrium, resource pricing is achieved when all agents maximize their individual payoffs subject to a network-level supply-demand balance constraint, which in turn maximizes the overall system-level payoff  \cite{Acemoglu}. More specifically, to balance network supply and demand, resource pricing corresponds to the optimal dual variables associated with supply-demand balance constraint, where the price also maximizes the system-level payoff. The prospect  of operating a transactive energy system as a market via optimal pricing under a competitive equilibrium has been widely studied in \cite{Chen10, Li2011, Papadaskalopoulos13, Hansen15, Li2015, Jadhav2018, Muthirayan2019}. 

Early and recent studies in both the economics and engineering literature, are primarily focused on controlling resources price volatility \cite{Black1973, Heston1993, Hagan2002, Boll2002}, rather than the resources price itself. Consequently, in practice, the optimally computed price to balance supply and demand in a market is potentially not socially acceptable. For example, in February 2021, the wholesale electricity price in Texas was considered by prosumers to be unacceptably high after widespread power outages. Prosumers that subscribed to the wholesale price reported receiving sky-high electricity bills, which greatly exceeded societal expectations for payment \cite{Blumsack}.   

Several authors have referred to price volatility as a rapid change in the pricing process. The Black Scholes formula \cite{Black1973} and Heston's extension \cite{Heston1993} are the most representative models of stochastic price volatility, while others include SABR \cite{Hagan2002} and GARCH \cite{Boll2002}. The authors in \cite{Kizi2010} argued that previous models did not penalize price volatility in the system-level objective, so they proposed to modify the system-level objective to account for price volatility, whereby constructing a dynamic game-theoretic framework for power markets. The authors in \cite{Tsitsi2015} proposed a different dynamic game-theoretic model for electricity markets, by incorporating a pricing mechanism with the potential to reduce peak load events and the cost of providing ancillary services. In \cite{Wei2014}, the dual version of the system-level welfare optimization problem was considered, where an explicit penalty term on the $L_2$ norm of price volatility was introduced in the system-level objective, allowing for trade-offs between price volatility and social welfare considerations.

In this paper, we focus on resource pricing rather than price volatility, to support the design of socially acceptable electricity markets. We propose a social shaping problem for a competitive equilibrium in a transactive energy systems, aiming to bound the energy price below a socially acceptable threshold. We focus on parameterized utility functions, where the parameters are abstracted from the preferences of
agents. We prescribe a range for the parameters in the utility functions, to ensure resources pricing under a competitive equilibrium is socially acceptable for all agents without creating a mismatch in supply and demand. The idea of introducing parameterized utility functions is informed by the concept of smart thermostat agents in the AEP Ohio gridSMART demonstration project \cite{Widergren2014D, Widergren2014R}. For transactive energy systems organized as multi-agent networks operating at competitive equilibriums, we establish the following results. 
\begin{itemize}
    \item We show the essence of the social shaping problem is a set decision problem, where for  strictly concave and continuously differentiable utility functions, the set decision is characterized by an implicit representation of the optimal price as a function of agent preferences. 
    \item For two representative classes of utility functions, nam-ely linear-quadratic functions and piece-wise linear functions, the exact set of parameters that guarantee socially acceptable pricing is established, respectively. 
\end{itemize}

In our prior work \cite{CDC}, we introduced the concept of social shaping of agent utility functions, which was followed by an investigation into social shaping with linear quadratic utility functions \cite{ANZCC}. In the current manuscript, the results on conceptual solvability and solutions for piece-wise linear functions are new, which are supported by the presentation of a series of new and large-scale numerical examples. 

This paper is organized as follows. In Section~\ref{rom2}, we introduce two multi-agent transactive energy systems, which are implemented in either a centralized or distributed transactive energy network. In Section~\ref{rom3}, we motivate the problem of social shaping, to enable the design of socially acceptable electricity pricing. In Section~\ref{rom4} and Section~\ref{rom5}, we present conceptual and analytical solutions to the social shaping problem, respectively. Section~\ref{rom6} presents concluding remarks.

\section{Multiagent Transactive Energy Systems}\label{rom2}
In this section, we introduce two multi-agent transactive energy systems models, and we recall some fundamental definitions and results related to such systems.  

\subsection{Transactive Energy Systems as Multi-agent Networks}

We present two simple yet representative setting where  $n$ agents indexed in the set $\mathrm{V}=\{ 1, ..., n \}$, each with local energy supply and demand, form a transactive energy system. The multi-agent network is designed to support microgrid operations, or would support distribution grid operators with managing ubiquitous behind-the meter renewable energy resources. For simplicity, we assume a lossless electrical network, with extensions to include both real and reactive network and load losses possible. When there is no external energy resource (e.g., a microgrid operating in isolation from the wider power grid), the $n$ agents seek to form an energy market where the overall energy supply and demand are balanced. Such a market must incorporate the diverse  interests of individual agents, while ensuring market efficiency.

\vspace{2mm}

\noindent{\it Multiagent Transactive Energy Systems (MTES)}.
Each agent $i$ produces electricity with a local energy resource, with $a_i \in \mathbb{R}^{\geq 0}$ representing \emph{local power production} (in kW). The overall \emph{network generation production} is represented  by $C := \sum_{i=1}^n a_i$, where $C>0$ (in kW). Each agent $i$ makes a decision on its \emph{energy consumption load}, denoted by  $x_i \in \mathbb{R}^{\geq 0}$ (in kW). Upon consuming the load $x_i$, agent $i$ receives a payment (or bill) attributed to its demand preference $f_i(x_i) = h(x_i; \theta_i)$, where $\theta_i$ is the \emph{personalized parameter} for the load preferences of agent $i$. Importantly, any shortfall or surplus of energy for each agent $i$, represented by $a_i-x_i$, must be accommodated by the transactive market. We denote the \emph{price per unit energy} by $\lambda\in \mathbb{R}$ (in \$$\backslash$kWh), and the income (or cost) of a transaction for agent $i$ is thereby $\lambda(a_i-x_i)$ when $a_i>x_i$ (or $a_i<x_i$).

\vspace{2mm}

\noindent{\it Multiagent Transactive Energy Systems with Strategic Trading (MTES-ST)}. We extend the previously defined MTES, by way of supporting \emph{strategic trading decisions} for each agent $i$ denoted by $e_i \in \mathbb{R}$. That is, the income (or cost) of the transactions for agent $i$ is $\lambda e_i$, where $e_i>0$ (or $e_i<0$). Importantly, there is an inherent constraint on strategic trading decisions for each agent, represented by $e_i\leq a_i-x_i$. Specifically, when $a_i>x_i$, there is a physical constraint on $e_i$ as the amount of energy sold by agent $i$ cannot exceed $a_i-x_i$. Furthermore, when $a_i<x_i$, there is a physical network constraint on $e_i$ for agent $i$ seeking to purchase $x_i-a_i$ amount of energy from the market.   

\subsection{System-level Equilibriums}
We draw on welfare theory from economics, considering an effective market price in the context of rational agent decisions. Specifically, for both MTES and MTES-ST, we consider the concepts of competitive equilibriums and social welfare equilibriums to precisely quantify price effectiveness and agent rationality.

Let $\mathbf{a}=(a_1, ..., a_n)^\top \in \mathbb{R}^n$ be a vector representing the \emph{local production profile}, or otherwise the electricity supply available from $n$ agents. Let $\mathbf{x}=(x_1, ..., x_n)^\top \in \mathbb{R}^n$ be a vector representing the \emph{local consumption profile},  or otherwise the electricity demand for $n$ agents.

\begin{definition}
	For the MTES, a competitive equilibrium is the pair of (1) price denoted by $\lambda^\ast\in \mathbb{R} $, and (2) local consumption profile denoted by $\mathbf{x}^\ast \in \mathbb{R}^n$, under which the following two conditions are satisfied.  
	\begin{itemize}
		
		\item[(i)] The local consumption profile $\mathbf{x}^\ast$ maximizes each individual agent payoff; i.e., each $x_i^\ast$ is a solution to the following optimization problem		\begin{equation}\label{opt_LD_1}
			\begin{aligned}
				\max_{{x}_i} \quad &  h(x_i; \theta_i) + \lambda^\ast (a_i -x_i) \\
			{\rm s.t.} \quad & x_i\in \mathbb{R}^{\geq0}.
			\end{aligned}
		\end{equation}
		
		\item[(ii)] The local consumption profile $\mathbf x^\ast$ balances the total energy consumption and supply across the network; that is,
		\begin{equation}\label{load_demand_supply_constraints}
			\sum_{i=1}^n x_i^\ast =C.
		\end{equation}
	\end{itemize}
\end{definition}
 
\begin{definition}
	Let $\mathbf{e}=(e_1, ..., e_n)^\top \in \mathbb{R}^n$ denote a \emph{strategic decision profile}. A competitive equilibrium for  MTES-ST is a triplet of ($\lambda^\ast, \mathbf{x}^\ast, \mathbf{e}^\ast$), under which the following conditions are satisfied.
	\begin{itemize}
		\item[(i)] The pair $(\mathbf x^\ast, \mathbf e^\ast)$ maximizes the individual payoff of each agent, i.e., $(x_i^\ast, e_i^\ast)$ is a solution to 
		\begin{equation}\label{opt_LTD_1}
			\begin{aligned}
				\max_{{x}_i, e_i} \quad &  h(x_i; \theta_i)+\lambda^\ast e_i \\
		{\rm s.t.}  \quad  & x_i+e_i\leq a_i, 
		\\ & x_i\in \mathbb{R}^{\geq0},\ \  e_i\in \mathbb{R}.
			\end{aligned}
		\end{equation}
		\item[(ii)] The strategic decision profile $\mathbf e^\ast$ balances the total resource supply and demand across the network, i.e.,
		\begin{equation}\label{trading_demand_supply_constraints}
			\sum_{i=1}^n e_i^\ast =0.
		\end{equation} 
	\end{itemize}
\end{definition}

The aforementioned competitive equilibriums support the establishment of an effective market with supply and demand balanced. Rational agents decisions are support by way of maximizing their individual payoffs.  
From classical welfare economic theory,  competitive equilibriums guarantee Pareto optimality in the sense that no agent can change a decision without reducing the payoff of other agents  \cite{Acemoglu,Arrow1954,Debreu1952}. Next, we consider optimality at a system-level, in the context of a social welfare problem (i.e., in the absence of a market).

\begin{definition}
(i)	For the  MTES, a social welfare equilibrium $\mathbf{x}^\ast$ is achieved by way of solving the following maximization problem
	\begin{equation}\label{opt_social_LD_1}
		\begin{aligned}
			\max_{\mathbf{x}} \quad &  \sum_{i=1}^{n} h(x_i; \theta_i) \\
			{\rm s.t.} \quad &  \sum_{i=1}^n x_i =C, \\
			&   x_i\in \mathbb{R}^{\geq0}, \, \, i \in V.
		\end{aligned}
	\end{equation}
	
(ii) For the MTES-ST, a social welfare equilibrium $(\mathbf{x}^\ast, \mathbf{e}^\ast)$ is achieved by solving the following maximization problem
	\begin{equation}\label{opt_social_LTD_1}
		\begin{aligned}
			\max_{\mathbf{x}, \mathbf{e}} \quad &  \sum_{i=1}^{n} h(x_i; \theta_i) \\
			{\rm s.t.} \quad &  \sum_{i=1}^n e_i =0, \\
			 \quad & x_i+e_i\leq a_i, \\
			 & x_i\in \mathbb{R}^{\geq0},  \, e_i\in \mathbb{R}, \, \, i \in V. 
		\end{aligned}
	\end{equation}	
\end{definition}

The social welfare equilibrium is consistent with the rationale of a utility-based system planner that designs and enforces energy consumption decisions across all agents. Such a system planner is not concerned by individual payoffs of each specific agent, but rather, the planner aims to maximize energy allocations across the network, even at the expense of some agents receiving suboptimal allocations. In what follows, we investigate conditions where both the competitive and social welfare equilibriums are equivalent. 

\begin{proposition}[as in \cite{CDC}]\label{prop_equilibrium_equivalence}
Suppose each $h(\cdot; \theta_i)$ is a concave function over the domain $\mathbb{R}^{\geq 0}$.	Then for both MTES and MTES-ST, the competitive equilibriums and the social welfare equilibriums are equivalent. 
\end{proposition}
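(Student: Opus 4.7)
The plan is to establish the claimed equivalence, separately for MTES and MTES-ST, by invoking Lagrangian duality on the social welfare problem and exploiting that its Lagrangian separates into precisely the per-agent subproblems defining a competitive equilibrium. Throughout, concavity of each $h(\cdot;\theta_i)$ together with linearity of the coupling constraints ensures strong duality holds and that the KKT/first-order conditions are sufficient for global optimality, so one may switch freely between the primal problem and its per-agent decomposed form.

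For MTES I would form the Lagrangian
\[
L(\mathbf{x},\lambda) \;=\; \sum_{i=1}^{n} h(x_i;\theta_i) \;-\; \lambda\Bigl(\sum_{i=1}^n x_i - C\Bigr), \qquad x_i\geq 0.
\]
Strong duality yields some $\lambda^\ast\in\mathbb{R}$ such that any maximizer $\mathbf{x}^\ast$ of the social welfare problem also maximizes $L(\cdot,\lambda^\ast)$ over the nonnegative orthant. Because $L$ is separable across $i$, each $x_i^\ast$ maximizes $h(x_i;\theta_i)-\lambda^\ast x_i$, or equivalently, after adding the constant $\lambda^\ast a_i$, the payoff in the agent's problem defining a competitive equilibrium; together with $\sum_i x_i^\ast = C$ this certifies $(\lambda^\ast,\mathbf{x}^\ast)$ as a competitive equilibrium. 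Conversely, given any competitive equilibrium $(\lambda^\ast,\mathbf{x}^\ast)$, summing the per-agent optimality inequalities $h(x_i^\ast;\theta_i)-\lambda^\ast x_i^\ast \geq h(x_i;\theta_i)-\lambda^\ast x_i$ over any $\mathbf{x}$ feasible in the social welfare problem (so $\sum x_i = C = \sum x_i^\ast$) makes the $\lambda^\ast$ terms cancel, leaving $\sum_i h(x_i^\ast;\theta_i)\geq \sum_i h(x_i;\theta_i)$, i.e., social welfare optimality.

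For MTES-ST the structure is identical: I would attach a multiplier $\mu$ to $\sum_i e_i = 0$ while keeping $x_i+e_i\leq a_i$ inside each agent's individual feasible set. The resulting Lagrangian again separates across $i$ into exactly the per-agent problem appearing in the MTES-ST equilibrium definition, and the same forward/reverse argument transfers verbatim.

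The main obstacle I anticipate lies on the MTES-ST side, where $e_i\in\mathbb{R}$ is unbounded and the coupling $x_i+e_i\leq a_i$ is an inequality rather than an equality. To keep the dualization and the per-agent subproblem well-posed, one must argue that $\mu\geq 0$ at equilibrium (otherwise the selling side $e_i\to -\infty$ would drive the agent payoff to $+\infty$), and show that whether the upper bound $x_i+e_i\leq a_i$ binds or not is irrelevant for the decoupling step. Both points can be settled by direct inspection of the agent subproblem in the MTES-ST equilibrium definition, but they are where more than the standard convex-duality boilerplate is needed.
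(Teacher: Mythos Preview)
The paper does not actually prove Proposition~\ref{prop_equilibrium_equivalence}; it is stated ``as in \cite{CDC}'' and the argument is deferred entirely to that reference. Your proposal supplies precisely the standard Lagrangian-duality proof one would expect in that reference: dualize the balancing constraint, use concavity plus linear constraints to invoke strong duality, and exploit separability of the Lagrangian to identify the per-agent subproblems with the competitive equilibrium conditions. Both directions you sketch for MTES are correct, and your identification of the only nontrivial wrinkle in the MTES-ST case---well-posedness of the agent subproblem forcing $\lambda^\ast\geq 0$ because $e_i$ is unbounded below---matches what the paper later uses (it repeatedly cites \cite{CDC} for exactly the fact that $\lambda^\ast\geq 0$ under MTES-ST). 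So your plan is sound and almost certainly aligns with the omitted proof.
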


\begin{figure*}
  \centering
  \begin{minipage}[b]{0.4\textwidth}
    \includegraphics[width=\textwidth]{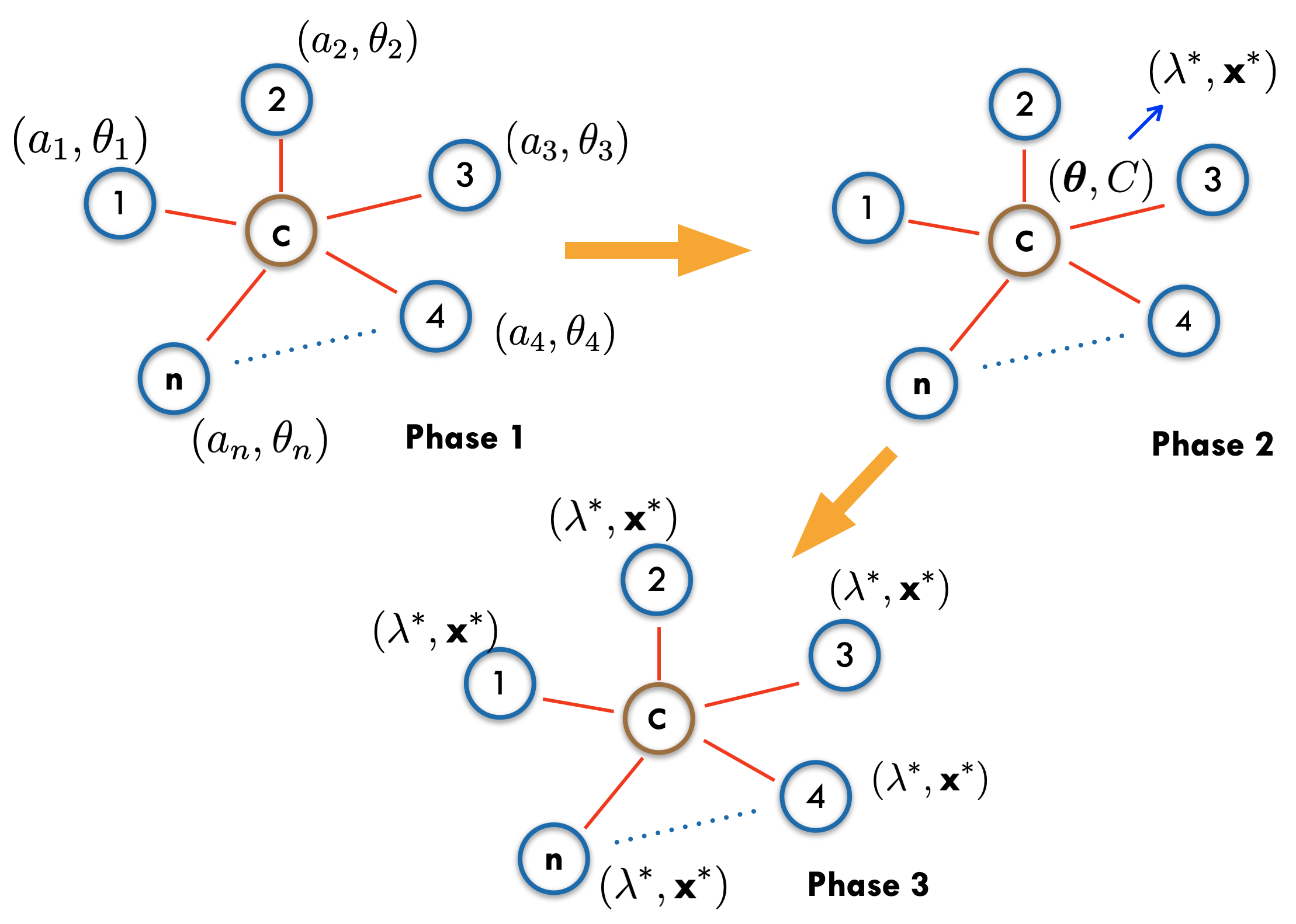}
  \end{minipage}
  \medskip \quad \quad 
  \begin{minipage}[b]{0.4\textwidth}
    \includegraphics[width=\textwidth]{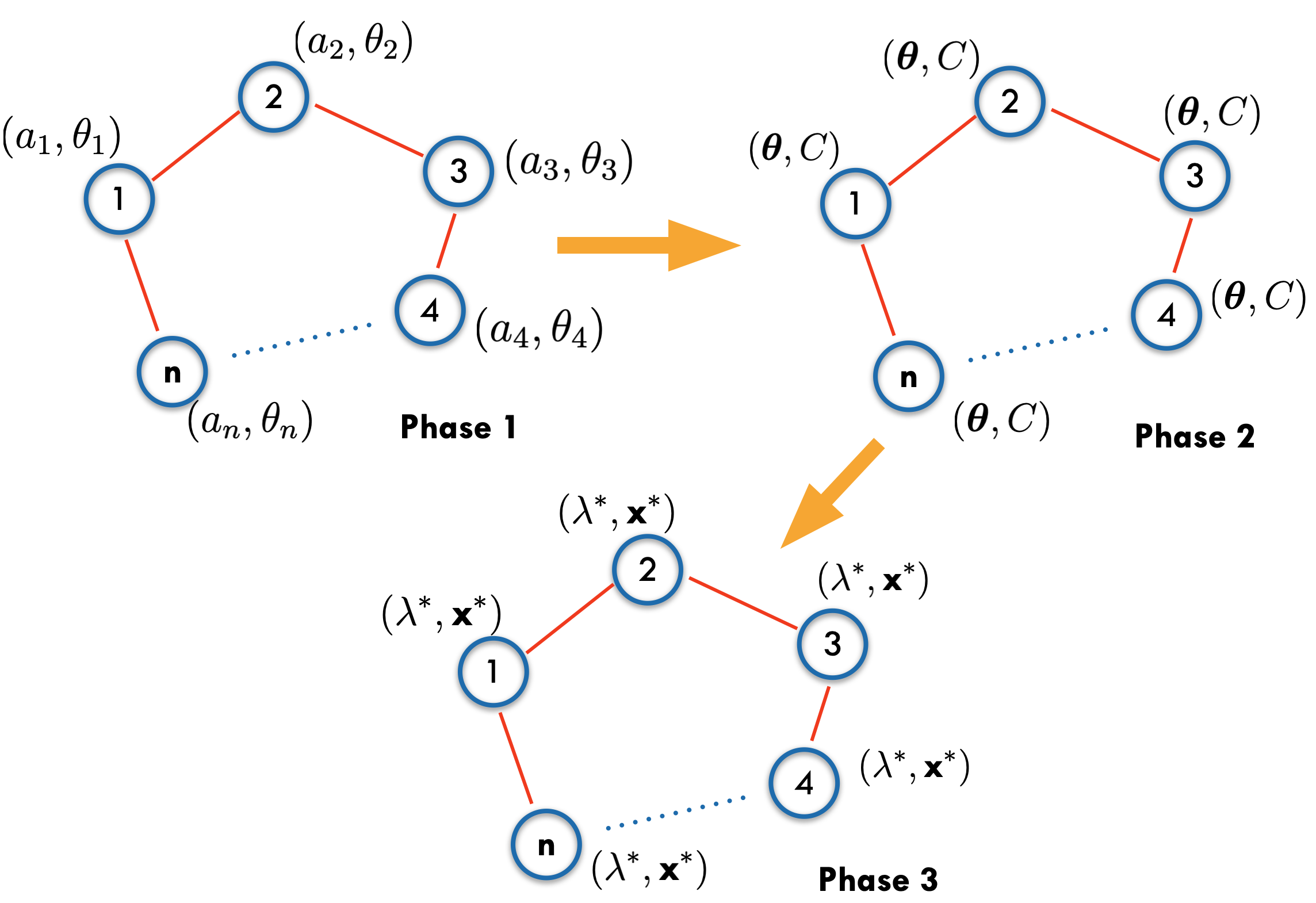}
  \end{minipage}
  \caption{Centralized (left) and distributed (right) implementations of competitive equilibriums for the proposed transactive energy systems. Each dark blue circle represents an agent in the system; the brown circle represents a central coordinator agent; each red line represent a communication link. The implementations take place in three sequential phases.  } \label{fig-implementation}
\end{figure*}

Equivalence for the MTES refers to a direct mapping of the competitive equilibrium pair ($\lambda^\ast$, $\mathbf{x}^\ast$), to the social welfare equilibrium $\mathbf{x}^\ast$.  Or otherwise, if $\mathbf{x}^\star$ is a social welfare  equilibrium for the  MTES, then there exists $\lambda^\ast \in \mathbb{R}$ such that ($\lambda^\ast$, $\mathbf{x}^\star$) is a competitive equilibrium. Equivalence for MTES-ST is similarly defined, providing a guarantee that market-based agent decisions coincide with utility-based planner decisions. More specifically, by solving the optimization problem for the social welfare equilibrium for either MTES or MTES-ST, the optimal dual variable is a Lagrangian multiplier associated with the supply-demand balance constraint, which is the optimal price for the competitive equilibrium   \cite{CDC}. 

Importantly, there is a critical difference between MTES and MTES-ST in terms of optimal pricing. For MTES, the prices under a competitive equilibrium can be either positive or negative. In contrast, for MTES-ST the price under a competitive equilibrium is always non-negative. Next, we explore connections between the two MTES and MTES-ST models. 

\begin{proposition}\label{prop-equivalence}
Suppose each $h(\cdot; \theta_i)$ is a concave function over the domain $\mathbb{R}^{\geq 0}$. Assume $\lambda_1^ \ast > 0$ and $\lambda_2^ \ast > 0$ correspond to optimal pricing signals for the competitive equilibrium for MTES and MTES-ST, respectively. Then there holds $\lambda_1^ \ast =\lambda_2^ \ast$,  and the agent decisions for MTES and MTES-ST are the same for the respective competitive equilibriums. 
\end{proposition}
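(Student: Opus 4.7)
The plan is to argue directly at the level of competitive equilibria, exploiting the simple observation that when the price is positive, strategic trading decisions in MTES-ST are pushed against the physical constraint and effectively disappear, collapsing the MTES-ST agent problem onto the MTES agent problem. First, I would consider the per-agent optimization \eqref{opt_LTD_1} under the assumption $\lambda_2^\ast>0$. Because the objective contribution $\lambda_2^\ast e_i$ is strictly increasing in $e_i$, and $e_i$ is otherwise unconstrained in $\mathbb{R}$, any optimal $(x_i^\ast,e_i^\ast)$ must activate the physical constraint, i.e., $e_i^\ast=a_i-x_i^\ast$. A brief argument by contradiction (take any feasible point with $x_i^\ast+e_i^\ast<a_i$ and increase $e_i$) formalizes this.

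Next, I would substitute $e_i = a_i - x_i$ back into \eqref{opt_LTD_1}. The problem reduces to
\begin{equation*}
\max_{x_i\ge 0}\; h(x_i;\theta_i) + \lambda_2^\ast(a_i - x_i),
\end{equation*}
which is exactly the MTES agent problem \eqref{opt_LD_1} with price $\lambda_2^\ast$. Thus, for any positive price, the set of maximizers $x_i^\ast$ in MTES-ST coincides with that in MTES. Applying the same observation to the aggregate balance constraint, the MTES-ST condition \eqref{trading_demand_supply_constraints} becomes $\sum_i(a_i-x_i^\ast)=0$, i.e., $\sum_i x_i^\ast=C$, which is exactly \eqref{load_demand_supply_constraints}.

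Putting these two reductions together, a triplet $(\lambda_2^\ast,\mathbf x^\ast,\mathbf e^\ast)$ with $\lambda_2^\ast>0$ is a competitive equilibrium of MTES-ST if and only if $(\lambda_2^\ast,\mathbf x^\ast)$ is a competitive equilibrium of MTES with $e_i^\ast = a_i - x_i^\ast$. Combined with the concavity hypothesis and Proposition \ref{prop_equilibrium_equivalence} (which guarantees that these equilibria exist and coincide with the social-welfare equilibria of \eqref{opt_social_LD_1}–\eqref{opt_social_LTD_1}, in particular yielding a well-defined optimal price), this forces $\lambda_1^\ast=\lambda_2^\ast$ and identifies the MTES and MTES-ST consumption decisions.

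I anticipate the only delicate point is justifying that the constraint $x_i+e_i\le a_i$ is always active at an optimum when $\lambda_2^\ast>0$; one must handle the degenerate case in which $h(\cdot;\theta_i)$ is not strictly concave and the maximizer $x_i^\ast$ is not unique, but the monotonicity argument in $e_i$ (rather than $x_i$) is insensitive to this and still forces $e_i^\ast=a_i-x_i^\ast$ for every optimal $x_i^\ast$. Beyond that, the remainder is a direct bookkeeping check that the feasibility and first-order conditions of the two equilibria are identical after the substitution, so no further technical machinery is needed.
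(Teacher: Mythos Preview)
Your proposal is correct and follows essentially the same route as the paper's proof: both argue that when $\lambda_2^\ast>0$ the constraint $x_i+e_i\le a_i$ must be active (the paper via a slack variable $s_i\ge 0$ and noting the objective is strictly decreasing in $s_i$, you via directly increasing $e_i$), then substitute $e_i=a_i-x_i$ to reduce \eqref{opt_LTD_1} to \eqref{opt_LD_1} and the trading balance \eqref{trading_demand_supply_constraints} to \eqref{load_demand_supply_constraints}. The slack-variable device versus your contradiction argument is a purely cosmetic difference.
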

\begin{proof}
    See Appendix \ref{Appendix_A1}.
\end{proof}

\subsection{Practical Implementations of Competitive Equilibriums}
In Fig.~\ref{fig-implementation} we illustrate two approaches to implementing MTES or MTES-ST, each approach including three sequential phases (i.e., P1, P2, P3). 

  \noindent \emph{Approach 1: Centralized computation via an aggregator}
\begin{itemize}
    \item [P1] The aggregator directly communicates with each agent $i \in \mathrm{V}$ to collect their individual load preferences $\theta_i$ and local power production $a_i$. 
    \item [P2] The aggregator computes  the overall network generation production $C$, then the social welfare equilibirum $\mathbf{x}^\ast$, and identifies the corresponding Lagrangian multiplier $\lambda^\ast$, which is the optimal price.
    \item [P3] The aggregator directly sends $(\lambda^\ast,\mathbf{x}^\ast)$ to each agent $i \in \mathrm{V}$.
    \end{itemize}
    
    \noindent \emph{Approach 2: Distributed computation without an aggregator} 
    \begin{itemize}
    \item[P1] Each agent $i\in \mathrm{V}$ selects their individual preference vector $\theta_i$, and communicates ($\theta_i, a_i$) to other agents. 
      \item [P2] Each agent $i\in \mathrm{V}$ runs, for example, a distributed processor agreement protocol \cite{1980} to receive vectors $\bm{\theta}=(\theta_1,\dots,\theta_n)^\top$ and $\mathbf{a}=(a_1,\dots,a_n)^\top$, and each agent computes the overall network generation production $C$.
    \item [P3] Each agent $i\in \mathrm{V}$ independently solves a social welfare equilibrium $\mathbf{x}^\ast$ and identifies the corresponding Lagrangian multiplier which is the optimal price $\lambda^\ast$. 
\end{itemize}
Each of the two implementations are underpinned by Proposition~1 that states the competitive equilibriums are equivalent to the social welfare equilibriums. That is, the resulting $\mathbf{x}^\ast$ is guaranteed to be a competitive equilibrium. Since the social welfare equilibriums depend only on the network generation production $C$, the processor agree protocol for the vector $\mathbf{a}$ can be replaced by a distributed average consensus algorithm \cite{Mesbahi2010}, where each agent obtains the average of all $a_i$, or equivalently, the network capacity $C$. 

\section{The Problem of Social Shaping}\label{rom3}
In this section, we motivate and define the problem of social shaping in a multi-agent transactive energy network. We focus on social shaping agent preferences to support optimization-based design of electricity prices. 
	
\subsection{Motivating Example}
Here, we provide a motivating example to highlight conditions where agent preferences significantly influence optimal pricing under competitive equilibriums.   
\vspace{2mm}

\noindent{\em Example 1}. Consider a MTES with four agents, where the electricity supply available is $\mathbf{a}=(a_{1},a_{2},a_{3},a_{4})^\top=(5,8,7,0)^\top.$ Each agent $i$ has a linear-quadratic utility function of the form $h(x_{i}; \theta_i)= -\frac{1}{2}b_{i}x_{i}^{2}+m_{i}b_{i}x_{i}$, where $(b_{1},b_{2},b_{3},b_{4})=(2,5,3,10)$ and $(m_{1},m_{2},m_{3},m_{4})=(6, 5, 6, 5)$. The social welfare equilibrium is computed by solving the optimization problem in \eqref{opt_social_LD_1}, which yields $\mathbf{x}^{\ast}=(5.12, 4.65, 5.41,4.82)^{\top}$. The Lagrangian multiplier corresponding to the supply-demand balance constraint $\sum_{i=1}^{n}x_{i} = \sum_{i=1}^{n}a_{i}$ is $\lambda^{\ast} = 1.765.$

Next, let $m_{4}$ take values in the interval $[5,30].$ We sample the interval uniformly with a step-size of $1$ to obtain $26$ different values for $m_{4}.$ For each $m_{4}$, the social welfare equilibrium and optimal price are computed. In Fig.~\ref{fig:motivating_example}, we present the optimal price $\lambda^{\ast}$ and the optimal resource allocation $x_{4}^{\ast}$ of agent $4$ as functions of $m_{4}$.  From Fig.~\ref{fig:motivating_example}, we observe that the optimal price $\lambda^{\ast}$ at $m_{4}=30$ exceeds the  optimal price at $m_{4}=5$ by a factor of $57$. Furthermore, we observe that the preferences of a single agent significantly alter the electricity price, and influence the distribution of resources allocations. \hfill$\square$

\begin{figure}
    \centering
    \includegraphics[width=0.35\textwidth]{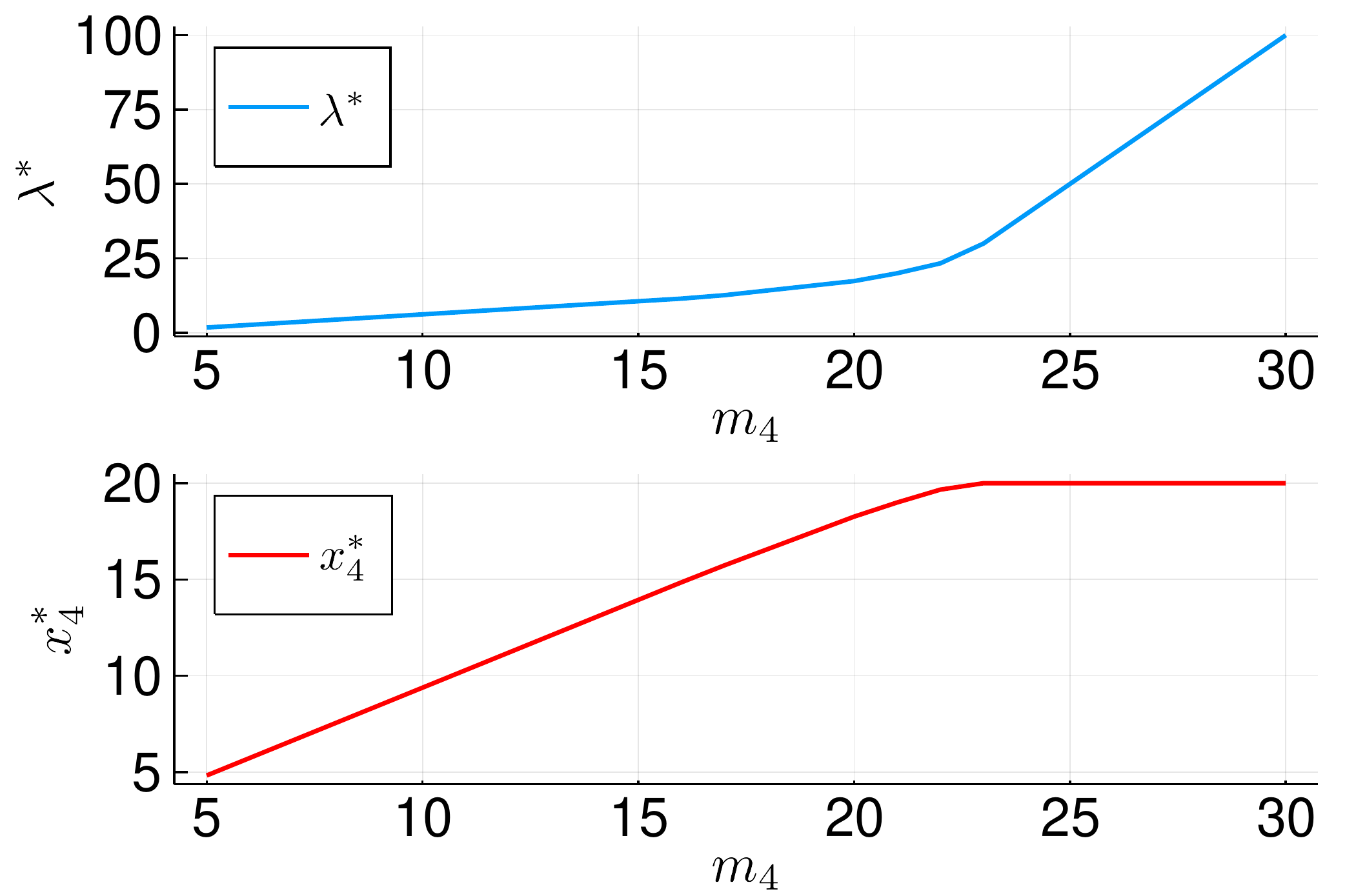}
    \caption{Top: optimal price $\lambda^{\ast}$ as functions of $m_{4}$. Bellow: agent~$4$ optimal resource allocation, $x_{4}^{\ast}$, as functions of $m_{4}$. }
    \label{fig:motivating_example}
\end{figure}

\subsection{Social Shaping Problem}
From Example 1, we observed that the optimal price for the MTES is in need of social shaping at a system level. That is, we observed that one agent was able to consume all available resources at a price that potentially prohibited others from accessing the energy. In what follows, we consider optimal pricing below an upper limit, where the upper limit is deemed socially acceptable by all agents. In this way, utility functions will be restricted within a prescribed range within our MTES framework. More importantly, socially admissible utility functions must correspond to socially acceptable optimal pricing. To this end, we define a social shaping problem for MTES and MTES-TS.

\begin{definition}(Social Shaping of Agent Preferences) Consider both MTES and MTES-ST. Let $\Theta$ be a space of parameters, where the preference $\theta_i \in \Theta$ is selected by agent $i$ in the context of their utility function $h(\cdot;\theta_i)$ for $i\in \mathrm{V}$. Let $\lambda^\dag >0$ denote an upper energy pricing limit, which represents the threshold of socially acceptable prices for agents $i\in \mathrm{V}$.  Find the set $\Theta$, such that agent preferences $(\theta_1,\dots,\theta_n) \in \Theta^n$ can be selected in a way where the optimal price corresponding to the competitive equilibriums satisfies $ \lambda^\ast \leq \lambda^\dag$. 
\end{definition}

\section{Conceptual Social Shaping Solvability} \label{rom4}

In this section, we study the solvability of the social shaping problem at conceptual levels.

\subsection{MTES}\label{conceptual_MTES}

Here, we introduce the social shaping problem for strictly concave and differentiable functions. Let each $h(\cdot;\theta_i)$ be continuously differentiable and strictly concave. It follows from Proposition~\ref{prop_equilibrium_equivalence} that the social welfare equilibrium and the competitive equilibrium exist and are equivalent. Consequently, we can consider either the social welfare problem or the competitive problem. 

Let $x_i^\ast$ be a solution to  (\ref{opt_LD_1}). Since $h(\cdot;\theta_i)$ is continuously differentiable and strictly concave, $h'(\cdot;\theta_i)$ is strictly monotone. We denote by $l(\cdot; \theta_i)$ the inverse function of $ h'(\cdot;\theta_i)$. Specifically, when $h(\cdot;\theta_i)$ is strictly concave, $ h'(\cdot;\theta_i)$ is strictly monotone, and as such the inverse must exist. As a result, we can derive 
	\begin{equation}\label{eq_x_MTES}
		x_i^\ast = \max \left\{l(\lambda^\ast; \theta_i), 0\right\}.
	\end{equation}
Next, substitute \eqref{eq_x_MTES} into the balancing equality in \eqref{load_demand_supply_constraints}, which yields
	\begin{equation}\label{eq_30}
		\sum_{i=1}^n \max \left\{l(\lambda^\ast; \theta_i), 0\right\} =C.
	\end{equation}
 Since $C>0$, there exists at least one agent with $x_i^\ast \neq 0$, corresponding to $x_i^\ast=l(\cdot; \theta_i)$. Also, as $h'(\cdot;\theta_i)$ is strictly monotone, its inverse $l(\cdot; \theta_i)$ is also strictly monotone. Consequently, the left-hand side of \eqref{eq_30} is the sum of at least one strictly monotone function, implying that the summation result is also a strictly monotone function whose inverse exists. Fixing $\bm{\theta}=(\theta_1,\dots,\theta_n)$, then there holds
\begin{equation}
    \lambda^\ast = \hat l (C; \bm{\theta}),
\end{equation}
where $\hat l (\cdot; \bm{\theta})$ is the inverse of the left-hand side of \eqref{eq_30}.

Next, we denote by $\chi_{\Theta}$ the \emph{maximal value of the set of optimal prices} that support all competitive equilibriums when agent preferences are drawn from $\bm{\theta}\in \Theta^n$. We define the maximal value of the set of optimal prices by
\begin{align}\label{chi_theta}
    \chi_{\Theta}:= \max_{\bm{\theta}\in \Theta^n} \hat l (C; \bm{\theta}).
\end{align}
From the definition of $\chi_{\Theta}$, the following result is immediately clear indicating the social shaping problem is conceptually captured by a set decision problem with respect to $\Theta$. 

\begin{theorem}\label{theorem6}
	Consider a MTES. Suppose each $h(\cdot;\theta_i)$ is strictly concave and differentiable over $\mathbb{R}^{\geq 0}$. Let $\lambda^\dag >0$ denote the threshold for socially acceptable energy pricing, such that $\lambda^\ast \leq \lambda^\dag$.  Then any set $\Theta$ satisfying $\chi_{\Theta}\leq \lambda^\dag$ solves the problem for social shaping of agent preferences. 
\end{theorem}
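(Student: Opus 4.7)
The plan is to observe that all the heavy lifting has already been done in the construction immediately preceding the statement, and that the theorem is essentially a corollary of how $\chi_\Theta$ was defined. So I would structure the proof as a clean chaining of three facts rather than any new argument.

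First, I would fix an arbitrary preference profile $\bm{\theta} = (\theta_1,\dots,\theta_n) \in \Theta^n$ and invoke Proposition \ref{prop_equilibrium_equivalence} to reduce the analysis of the competitive equilibrium to the social welfare equilibrium \eqref{opt_social_LD_1}, whose existence and uniqueness follow from the strict concavity and continuous differentiability of each $h(\cdot;\theta_i)$. Next, I would recall the KKT characterization that led to $x_i^\ast = \max\{ l(\lambda^\ast;\theta_i), 0\}$ in \eqref{eq_x_MTES} and substitute into the balance condition \eqref{load_demand_supply_constraints} to recover \eqref{eq_30}. The monotonicity argument given in the paragraph around \eqref{eq_30} then inverts this identity to yield $\lambda^\ast = \hat l(C;\bm{\theta})$ unambiguously.

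Second, I would apply the definition \eqref{chi_theta} of $\chi_\Theta$ directly: since $\bm{\theta} \in \Theta^n$ was arbitrary,
\begin{equation*}
\lambda^\ast = \hat l(C;\bm{\theta}) \;\leq\; \max_{\bm{\theta}'\in\Theta^n} \hat l(C;\bm{\theta}') \;=\; \chi_\Theta.
\end{equation*}
Chaining with the hypothesis $\chi_\Theta \leq \lambda^\dag$ yields $\lambda^\ast \leq \lambda^\dag$, which is exactly the socially acceptable pricing condition demanded in the social shaping problem.

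The main obstacle, if any, is not algebraic but conceptual: I need to justify that the inverse $\hat l(\cdot;\bm{\theta})$ in the step $\lambda^\ast = \hat l(C;\bm{\theta})$ is well defined on the relevant range. The text handles this by noting that because $C>0$ at least one agent is active ($x_i^\ast>0$), so that at least one branch of the sum in \eqref{eq_30} is strictly monotone and the overall sum is strictly monotone wherever it is not identically zero. I would make this explicit and, if needed, also remark that the $\max$ in \eqref{chi_theta} should be interpreted as a supremum unless $\Theta$ is compact and $\hat l$ is continuous in $\bm{\theta}$; in either interpretation the inequality $\hat l(C;\bm{\theta}) \leq \chi_\Theta$ required for the argument still holds, so the conclusion is unaffected.
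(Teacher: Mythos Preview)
Your proposal is correct and matches the paper's own treatment: the paper does not give a separate proof of this theorem at all, stating only that ``the following result is immediately clear'' from the definition of $\chi_\Theta$. Your argument is precisely the natural unpacking of that one-line observation, and your added remarks on the well-definedness of $\hat l$ and the max/sup distinction are reasonable caveats the paper leaves implicit.
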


 \subsection{Homogenous MTES}

Consider the situation where agents maintain a common preference, i.e., $\theta_i=\hat{\theta}$. For example, each agent $i$ maintains a preference $\hat{\theta}$, which is an average of true preferences across all agents, that is $\hat{\theta}=\frac{1}{n}\sum_{i=1}^n {\theta}_i,\ i\in\mathrm{V}$, where the set $\Theta$ is convex. 

 \begin{theorem}
Consider a MTES with homogenous preferences, i.e., $\theta_i=\theta$ for all $i\in \mathrm{V}$.  Suppose each $h(\cdot;\theta_i)$ is concave and differentiable over $\mathbb{R}^{\geq 0}$. Let $\lambda^\dag >0$ denote the threshold for socially acceptable energy pricing, such that $\lambda^\ast \leq \lambda^\dag$. Then a solution for $\Theta$  for the social shaping problem is given by	\begin{equation}\label{eq_set_MTES_homogenous}
		\Theta = \big\{ \theta:  h' ({C}/{n}; \theta) \leq \lambda^\dag \big\}.
	\end{equation}
\end{theorem}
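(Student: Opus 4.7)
The plan is to exploit the symmetry of the homogeneous social welfare problem~\eqref{opt_social_LD_1} to identify an explicit optimizer, and then read off the associated Lagrange multiplier as the optimal competitive-equilibrium price via Proposition~\ref{prop_equilibrium_equivalence}. Since every agent shares the same preference $\theta$, both the objective $\sum_{i=1}^{n} h(x_i;\theta)$ and the feasible set in~\eqref{opt_social_LD_1} are invariant under permutations of the components of $\mathbf{x}$; this symmetry, combined with concavity of $h(\cdot;\theta)$, will pin down the uniform allocation as optimal.

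First I would argue that $\mathbf{x}^\circ := (C/n,\dots,C/n)^\top$ is a maximizer of~\eqref{opt_social_LD_1}. Given any maximizer $\mathbf{x}^\ast$, every permutation $P\mathbf{x}^\ast$ is also feasible and, by symmetry of the objective, attains the same objective value. Averaging over all $n!$ permutations gives $\mathbf{x}^\circ$ as the barycenter, which by concavity of $\sum_{i=1}^n h(\cdot;\theta)$ achieves an objective value no smaller than that of $\mathbf{x}^\ast$. Hence $\mathbf{x}^\circ$ is itself optimal. Next, because $C>0$ forces $x_i^\circ = C/n > 0$, the non-negativity constraints in~\eqref{opt_social_LD_1} are inactive at $\mathbf{x}^\circ$, so complementary slackness eliminates those multipliers and the KKT stationarity condition reduces to $h'(C/n;\theta) = \lambda^\ast$, where $\lambda^\ast$ is the multiplier attached to $\sum_{i=1}^{n} x_i = C$. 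Invoking Proposition~\ref{prop_equilibrium_equivalence}, this $\lambda^\ast$ is precisely a competitive-equilibrium price, so the inequality $\lambda^\ast \leq \lambda^\dag$ becomes $h'(C/n;\theta)\leq \lambda^\dag$, yielding the set in~\eqref{eq_set_MTES_homogenous}.

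The subtle point is that the theorem assumes only concavity of $h(\cdot;\theta)$, not strict concavity, so the maximizer of~\eqref{opt_social_LD_1} need not be unique and the set of admissible KKT multipliers could a priori contain several values. I would handle this by observing that at the specific symmetric optimizer $\mathbf{x}^\circ$ the stationarity equation uniquely determines the multiplier as $h'(C/n;\theta)$, and that this is a valid competitive-equilibrium price by Proposition~\ref{prop_equilibrium_equivalence}; since the theorem only claims that the prescribed set is \emph{a} solution to the social shaping problem, selecting the price computed at the symmetric optimizer is enough. I expect the permutation-averaging step to be the cleanest part of the proof, and the main care needed is in the book-keeping of the KKT multiplier when strict concavity is not available.
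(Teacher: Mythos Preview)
Your proof is correct and arrives at the same identity $\lambda^\ast = h'(C/n;\theta)$ as the paper, but by a genuinely different route. The paper works on the competitive-equilibrium side: it takes the individual problem~\eqref{opt_LD_1}, writes the unconstrained first-order condition $h'(\hat x_i;\theta)=\lambda^\ast$, notes that homogeneity forces all $\hat x_i$ to coincide, rules out $\hat x\le 0$ by contradiction with $\sum_i x_i^\ast = C>0$, and then reads off $\hat x = C/n$ from the balancing equality. You instead attack the social-welfare problem~\eqref{opt_social_LD_1}, use permutation-averaging plus concavity to place the symmetric allocation among the maximizers, and extract $\lambda^\ast$ as the KKT multiplier there (the non-negativity constraints being slack). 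Your approach handles the absence of strict concavity more transparently---the paper's line ``$\hat x_i$ is obtained when the derivative of the objective function equals zero'' tacitly presumes a unique unconstrained critical point---whereas the paper's argument is shorter and avoids any appeal to KKT conditions or constraint qualifications. Both proofs ultimately lean on Proposition~\ref{prop_equilibrium_equivalence} to identify the multiplier with the competitive price.
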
 
\begin{proof}
See Appendix \ref{Appendix_A}.
\end{proof}

\subsection{MTES-ST}
 In our prior work \cite{CDC}, we showed that $\lambda^\ast \geq 0$ for MTES-ST. It follows that $\lambda^\ast = 0$ corresponds to socially resilient pricing, as it satisfies $\lambda^\ast \leq \lambda^\dag$. In what follows, we consider socially resilient pricing for MTES-ST for the case where $\lambda^\ast > 0$.
 
 According to Proposition~\ref{prop-equivalence}, when the price is positive, MTES and MTES-ST yield the same optimal pricing $\lambda^\ast >0$, and agent decisions. Consequently, we define $\chi_{\Theta}$ as \eqref{chi_theta}, and introduce the following theorem.
 

\begin{theorem} 
	Consider a MTES-ST. Suppose each $h(\cdot;\theta_i)$ is strictly concave and differentiable over $\mathbb{R}^{\geq 0}$. Let $\lambda^\dag >0$ denote the threshold for socially acceptable energy pricing, such that $\lambda^\ast \leq \lambda^\dag$. Then any set $\Theta$ satisfying $\chi_{\Theta}\leq \lambda^\dag$ solves the problem for social shaping of agent preferences. 
\end{theorem}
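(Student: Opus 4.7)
The plan is to reduce the MTES-ST setting to the MTES setting via Proposition~\ref{prop-equivalence}, and then to quote the same implicit-function characterization that was used to derive Theorem~\ref{theorem6}. The whole argument rests on a case split on the sign of the optimal MTES-ST price.

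First I would recall, as noted in the paragraph preceding the theorem, that any optimal MTES-ST price satisfies $\lambda^\ast \geq 0$. If $\lambda^\ast = 0$, then since $\lambda^\dag > 0$ by hypothesis, the required bound $\lambda^\ast \leq \lambda^\dag$ holds trivially for every realization of $\bm\theta \in \Theta^n$, and no condition on $\Theta$ is needed. The substantive case is therefore $\lambda^\ast > 0$.

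In that case, I would invoke Proposition~\ref{prop-equivalence} to identify the MTES-ST optimal price with the optimal price of the corresponding MTES, and then read off the characterization derived in Section~\ref{conceptual_MTES}, namely $\lambda^\ast = \hat l(C;\bm\theta)$ where $\hat l(\cdot;\bm\theta)$ is the inverse of the strictly monotone map on the left-hand side of \eqref{eq_30}. From the definition~\eqref{chi_theta} of $\chi_\Theta$, this gives
\begin{equation*}
\lambda^\ast \;=\; \hat l(C;\bm\theta) \;\leq\; \max_{\bm\theta \in \Theta^n} \hat l(C;\bm\theta) \;=\; \chi_\Theta \;\leq\; \lambda^\dag,
\end{equation*}
which is the desired inequality. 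Combining the two cases, any $\Theta$ satisfying $\chi_\Theta \leq \lambda^\dag$ is a solution to the social shaping problem for MTES-ST.

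The main obstacle lies in the bridge from MTES-ST back to MTES: Proposition~\ref{prop-equivalence} as stated presumes that the MTES with the same parameters also admits a positive competitive price $\lambda_1^\ast$, and this is not automatic from $\lambda^\ast > 0$ for MTES-ST alone. To close this gap I would argue from the KKT conditions of the MTES-ST social welfare problem: strict concavity and continuous differentiability of each $h(\cdot;\theta_i)$ together with $\lambda^\ast > 0$ imply that the trading inequality $x_i^\ast + e_i^\ast \leq a_i$ must bind for every agent (otherwise some agent could strictly improve its payoff by trading an additional unit at the positive price). Summing these equalities and using the balance $\sum_i e_i^\ast = 0$ yields $\sum_i x_i^\ast = C$, so $(\lambda^\ast, \mathbf{x}^\ast)$ is simultaneously a competitive equilibrium for the corresponding MTES with the same positive price, legitimizing the application of Proposition~\ref{prop-equivalence} in the previous paragraph.
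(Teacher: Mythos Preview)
Your argument is correct and follows the same route as the paper: the text preceding the theorem dismisses $\lambda^\ast = 0$ as trivially acceptable and, for $\lambda^\ast > 0$, appeals to Proposition~\ref{prop-equivalence} to inherit the MTES characterization $\lambda^\ast = \hat l(C;\bm\theta)$ and hence the bound by $\chi_\Theta$. Your closing paragraph, which fills the gap that Proposition~\ref{prop-equivalence} formally assumes both prices positive, is precisely the argument given in the proof of Proposition~\ref{prop-equivalence} in Appendix~\ref{Appendix_A1} (positivity of $\lambda^\ast$ forces $s_i^\ast=0$, hence $\sum_i x_i^\ast = C$), so you have reproduced rather than supplemented the paper's reasoning.
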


 \section{Analytical Social Shaping Solutions}\label{rom5}
 
In this section, we focus on two fundamental classes of utility functions, i.e., linear-quadratic functions and piece-wise linear functions, and we show the social shaping problems can be explicitly solved.

\subsection{Linear Quadratic Utility Functions}

We impose the following assumption.

\vspace{4mm}

\noindent{\em Assumption 1}. Let $\theta_i:=(b_i,m_i)\in \mathbb{R}^{> 0} \times \mathbb{R}^{> 0}$, where \begin{equation}\label{eq_utility_q}
	h(x_i; \theta_i)=-\frac{1}{2} b_i x_i^2 +m_i b_i x_i. 
	\end{equation} 
Fig. \ref{fig_utility_quadratic} illustrates example linear quadratic utility functions.
\begin{figure}[h]
  \centering
  \includegraphics[width=.75\linewidth]{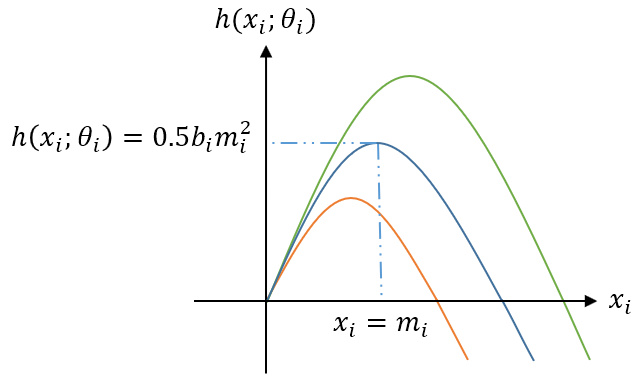}
  \caption{Example linear quadratic utility functions. }  \label{fig_utility_quadratic}
\end{figure}

\begin{theorem}\label{theorem1}
Consider a MTES with a quadratic utility function as defined in Assumption 1. Assume $\lambda^\dag$ is the upper limit of the socially acceptable price.  Suppose $(b_{\rm max}, m_{\rm max})\in \mathbb{R}^2_{> 0}$ is selected from the following set
	\begin{equation}\label{eq_set_1}
		\begin{gathered}
			\begin{aligned}
		\mathscr{S}_{\ast} =\left\{
		m_{\rm max} \leq \frac{C}{n}, \, b_{\rm max} \in \mathbb{R}^{>0} \right\} \mcup \\ \left\{ m_{\rm max} > \frac{C}{n}, \, b_{\rm max} \leq \frac{n \lambda^\dag}{nm_{\rm max}-C} \right\}.
		   \end{aligned}
		\end{gathered}
	\end{equation} 
Then  $\lambda^\ast$ is always socially acceptable since $\lambda^\ast \leq \lambda^\dag$, for all utility functions satisfying  $b_i \leq b_{\rm max}$ and $m_i \leq m_{\rm max}$.
\end{theorem}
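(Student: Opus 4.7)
The plan is to specialize the implicit price equation from Section \ref{conceptual_MTES} to the quadratic utility and then check, in each of the two regimes that define $\mathscr{S}_\ast$, that the threshold $\lambda^\dag$ itself produces an aggregate demand at most $C$; monotonicity of the aggregate-demand function then forces $\lambda^\ast\leq\lambda^\dag$.

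Concretely, under Assumption 1 one has $h'(x_i;\theta_i)=-b_i x_i+m_i b_i$, which is continuous and strictly decreasing, so its inverse is $l(\lambda;\theta_i)=m_i-\lambda/b_i$. Substituting into \eqref{eq_x_MTES} and \eqref{eq_30} yields $x_i^\ast=\max\{m_i-\lambda^\ast/b_i,\,0\}$ and the implicit price relation
\[
F(\lambda;\bm\theta)\;:=\;\sum_{i=1}^{n}\max\bigl\{m_i-\lambda/b_i,\,0\bigr\}\;=\;C.
\]
The function $\lambda\mapsto F(\lambda;\bm\theta)$ is continuous, non-increasing on $\mathbb{R}^{\geq 0}$, and strictly decreasing on any interval where at least one summand is active; since $C>0$, the solution $\lambda^\ast$ lies in this strictly-decreasing region. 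Hence $\lambda^\ast\leq\lambda^\dag$ is equivalent to the single inequality $F(\lambda^\dag;\bm\theta)\leq C$, and the theorem reduces to verifying this inequality uniformly over all admissible profiles with $b_i\leq b_{\max}$ and $m_i\leq m_{\max}$.

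For the componentwise analysis, observe that $m_i-\lambda^\dag/b_i$ is strictly increasing in $m_i$ (trivially) and in $b_i$ (since $\partial_{b_i}(m_i-\lambda^\dag/b_i)=\lambda^\dag/b_i^{2}>0$). Therefore the worst case is attained at the corner $(b_i,m_i)=(b_{\max},m_{\max})$, giving
\[
F(\lambda^\dag;\bm\theta)\;\leq\;n\,\max\bigl\{m_{\max}-\lambda^\dag/b_{\max},\,0\bigr\}.
\]
I would now split into the two regimes of $\mathscr{S}_\ast$. If $m_{\max}\leq C/n$, then each summand is bounded by $m_i\leq m_{\max}\leq C/n$, so $F(\lambda^\dag;\bm\theta)\leq n\cdot(C/n)=C$. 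If instead $m_{\max}>C/n$, the bound $b_{\max}\leq n\lambda^\dag/(nm_{\max}-C)$ rearranges to $\lambda^\dag/b_{\max}\geq m_{\max}-C/n$, whence $m_{\max}-\lambda^\dag/b_{\max}\leq C/n$ and again $F(\lambda^\dag;\bm\theta)\leq C$.

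The main obstacle is not computational but structural: one must carefully handle the inactive ($x_i^\ast=0$) terms so that the strict monotonicity of $F$ on the relevant interval can be invoked to convert a pointwise bound at $\lambda^\dag$ into the desired inequality $\lambda^\ast\leq\lambda^\dag$, and one must recognize that the worst-case profile is attained componentwise at $(b_i,m_i)=(b_{\max},m_{\max})$, which in turn explains why the threshold $C/n$ emerges as the natural splitting point for the two branches of $\mathscr{S}_\ast$.
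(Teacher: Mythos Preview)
Your argument is correct and rests on the same two ingredients as the paper's proof in Appendix~\ref{Appendix_B}: the closed form $x_i^\ast=\max\{m_i-\lambda^\ast/b_i,0\}$ (the paper's Lemma~\ref{lemma4}) and the monotonicity of the equilibrium price in each $(m_i,b_i)$ (the paper's Lemma~\ref{lemma3}), which pushes the worst case to the corner $(b_{\max},m_{\max})$.

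The packaging, however, is a little different and arguably cleaner. The paper treats the two branches of $\mathscr{S}_\ast$ by separate mechanisms: in Case~(i) it invokes Lemma~\ref{lemma1} to conclude $\lambda^\ast\le 0<\lambda^\dag$, and in Case~(ii) it first disposes of $\lambda^\ast\le 0$ and then, for $\lambda^\ast>0$, computes $\lambda^\ast_{\max}$ at the corner and checks it against $\lambda^\dag$. You instead reduce everything to the single test $F(\lambda^\dag;\bm\theta)\le C$ for the aggregate-demand function, and verify that inequality in both regimes; the sign of $\lambda^\ast$ never has to be discussed. What your route buys is uniformity and brevity; what the paper's route buys is an explicit formula $\lambda^\ast_{\max}=b_{\max}(m_{\max}-C/n)$ for the worst-case price, which makes the tightness of the bound in \eqref{eq_set_1} transparent.
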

\begin{proof}
See Appendix \ref{Appendix_B}.
\end{proof}

 
\noindent{\em Example 2}. Consider a MTES with linear-quadratic utility functions as defined in Assumption~1. 

Let the local resource $a_{i}$ of each agent $i$, be a random number sampled from a normal distribution $\mathcal{N}(\mu_{a},\,\sigma_{a}^{2})$ truncated to the interval $[0,10].$ The mean value of the local resources is $\mu_{a} = 5$, and the standard deviation is $\sigma_{a} = 1.25 $.  Overall network generation production, $C$, is the summation of all local resources, i.e., $C=\sum_{i=1}^{n}a_{i}.$ Recall, the personalized parameter $\theta_i=(b_{i}, m_{i})$ represents the preferences of agent $i$. Let preferences for agent $1$, $(b_{1}, m_{1})$, be represented by $m_{1}$, a value from a uniform distribution truncated to the interval $[\frac{C}{n},\frac{100C}{n}]$, and $b_{1} = \frac{n\lambda^{\dag}}{nm_{1}-C}$.  Given $(b_{1}, m_{1}),$ let the personalize parameters $\theta_i=(b_{i}, m_{i})$ for all other agents be sampled from two uniform distributions truncated to the interval $(0,b_{1}]$ and the interval $(0,m_{1}]$, respectively. In this manner, we have $(b_{\max}, m_{\max})=(b_{1}, m_{1})$, and in what follows we verify that the personalized parameter $\theta_i$ for each agent is within the range prescribed in \eqref{eq_set_1}.

Consider $n=10000$ agents, and let the socially acceptable threshold $\lambda^{\dag}$  take value from $\{20,22,24,26,28,30\}$. For each possible $\lambda^{\dag}$, we conduct $K=1000$ numerical experiments, for each of which we generate a parameter set $(\mathbf{b}^{(k)}, \mathbf{m}^{(k)})$, where $k=1,\dots,K$, by following the aforementioned parameter selection process. In each experiment $k$, we input parameters $(\mathbf{b}^{(k)}, \mathbf{m}^{(k)})$ to compute the optimal price, that is, the Lagrangian multiplier associated with the equity constraint $\sum_{i=1}^{n}x_{i} = \sum_{i=1}a_{i}$, as in \eqref{opt_social_LD_1}. In Fig.~\ref{quadratic_pricing}, we present optimal pricing for quadratic utility functions within the range \eqref{eq_set_1} under different upper limits $\lambda^{\dag} = 20,22,24,26,28,30,$ each corresponding to the respective $1000$ numerical experiments. 


Next, let $\lambda^{\dag}=20$ and let $n \in \{100,1000,10000,100000\}$. For each possible value of $n$, we conduct $K=1000$ experiments with the corresponding personalized parameter sets, $\theta_i$, obtained by applying the aforementioned process. In Fig.~\ref{quadratic_scale}, we present optimal pricing for quadratic utility functions within the range \eqref{eq_set_1}, where $1000$ experiments are conducted for each system scale $n = 100,1000,10000,100000$.

\begin{figure}[ht]
    \centering
    \includegraphics[width=0.35\textwidth]{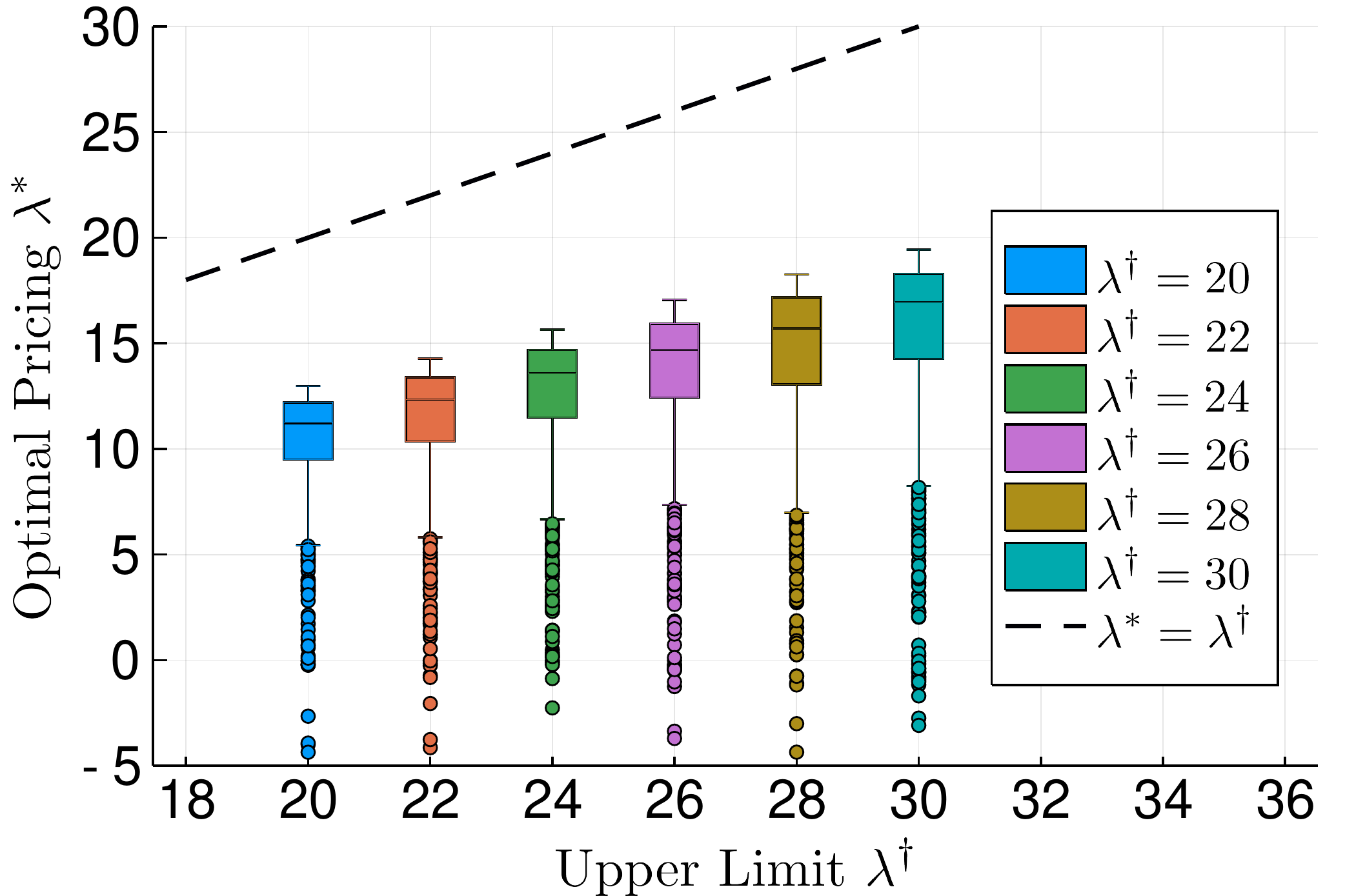}
    \caption{Optimal pricing for quadratic utility functions within the range \eqref{eq_set_1}, with upper electricity pricing limits of $\lambda^{\dag} = 20,22,24,26,28,30.$ }
    \label{quadratic_pricing}
\end{figure}

\begin{figure}[ht]
    \centering
    \includegraphics[width=0.35\textwidth]{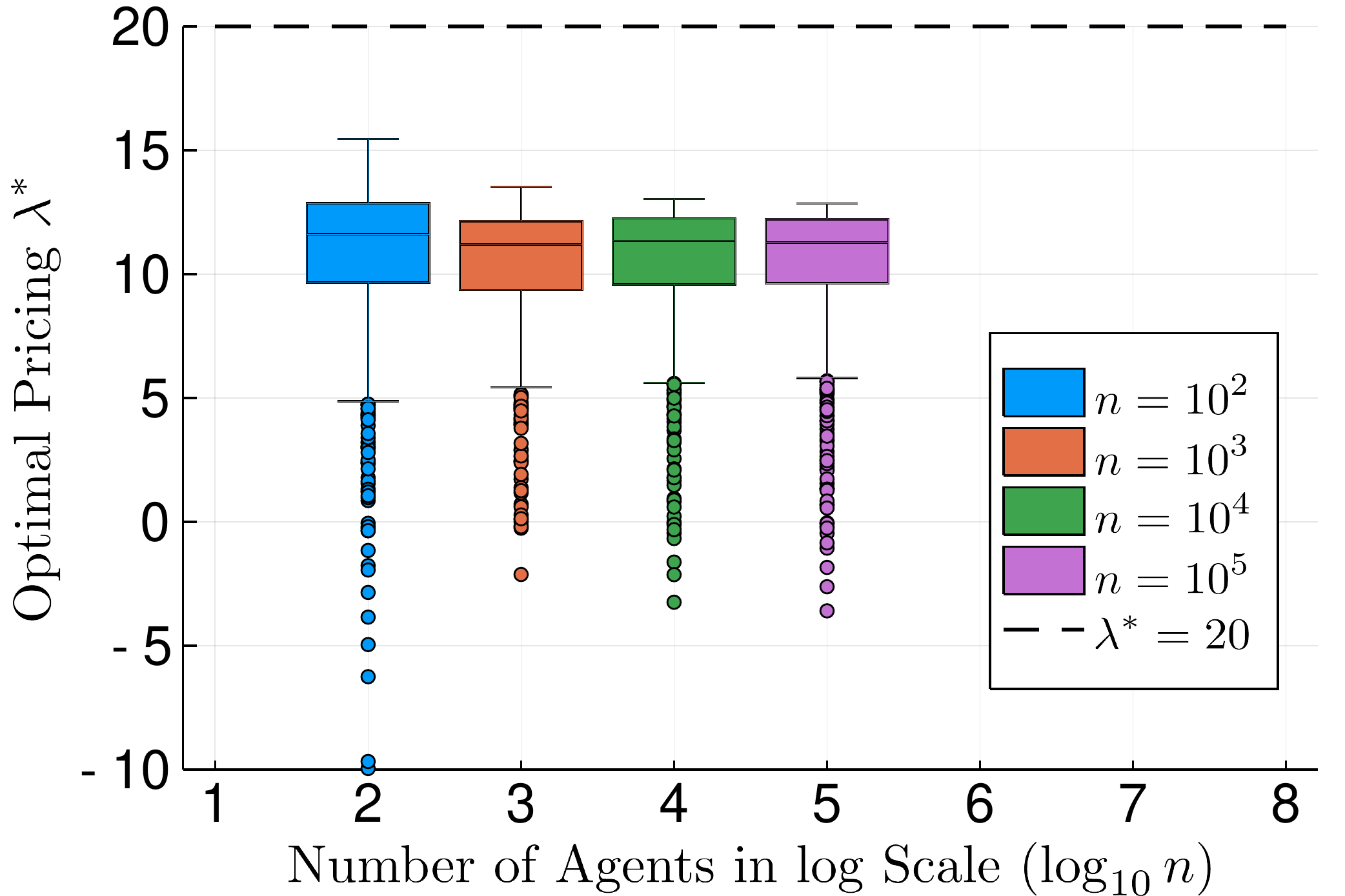}
    \caption{Optimal pricing for quadratic utility functions within the range \eqref{eq_set_1}, considering multi-agent systems of scale $n = 100,1000,10000,100000$.}
    \label{quadratic_scale}
\end{figure}

In Fig.~\ref{quadratic_pricing} and Fig.~\ref{quadratic_scale}, the middle line, and the lower and upper boundaries of each box (interquartile range or IQR) correspond to the median, and 25/75 percentile of the $1000$ optimal prices, respectively. The lower and upper whiskers extend maximally $1.5 \times$ of IQR from 25 percentile downwards and 75 percentile upwards, respectively. The points that are located outside the whiskers are considered data outliers. 

In Fig~\ref{quadratic_pricing}, we observe that the optimal prices in the numerical experiments are below the corresponding upper limit, i.e., $\lambda^{\ast} = \lambda^{\dag}$, indicating the optimal pricing being socially acceptable. In Fig.~\ref{quadratic_scale}, we observe that the optimal prices in the numerical experiments considering various multi-agent system scales, are lower than the upper limit $\lambda^{\dag} = 20$. Our observations in both Fig.~\ref{quadratic_pricing} and Fig.~\ref{quadratic_scale} correspond to, and validate Theorem~\ref{theorem1}.  \hfill$\square$

\subsection{Piece-wise Linear Utility Functions}

We next consider  the following assumption. 
\vspace{2mm}

\noindent{\em Assumption 2}. Let $\theta_i:=(\beta_i, \phi_i)\in \mathbb{R}^{> 0} \times \mathbb{R}^{> 0}$ with	
	\begin{equation}\label{eq_utility_linear}
h(x_i; \theta_i):   = \min \{\beta_i x_i, \phi_i \beta_i\}.\end{equation}
Fig. \ref{fig_utility_linear} illustrates example piece-wise linear utility functions.
	 
 \begin{figure}[h]
  \centering
  \includegraphics[width=.7\linewidth]{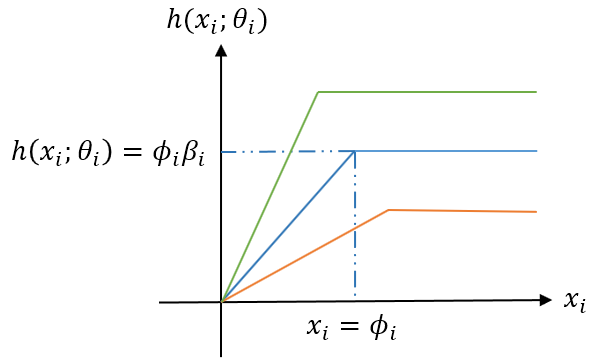}
  \caption{The class of piece-wise linear functions. }  \label{fig_utility_linear}
\end{figure}

\begin{theorem}\label{theorem4}
	Consider a MTES with piece-wise linear utility functions as defined in Assumption 2.  Suppose $(\beta_{\rm max}, \phi_{\rm max}) $ is selected from the following set
	\begin{equation}\label{eq_set_3}
		\begin{gathered}
			\begin{aligned}
				\mathscr{S}_{\ast} =\left\{
				\phi_{\rm max} < \frac{C}{n}, \, \beta_{\rm max} \in \mathbb{R}^{>0} \right\} \mcup \\ \left\{ \phi_{\rm max} \geq \frac{C}{n}, \, \beta_{\rm max} \leq \lambda^\dag \right\}.
			\end{aligned}
		\end{gathered}
	\end{equation} 
Then the resulting  $\lambda^\ast$ is always socially acceptable since $\lambda^\ast \leq \lambda^\dag$, for all utility functions satisfying  $\beta_i \leq \beta_{\rm max}$ and $\phi_i \leq \phi_{\rm max}$.
\end{theorem}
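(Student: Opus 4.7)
The plan is to reduce the social-shaping question to a KKT analysis of the social welfare problem, exploiting Proposition~\ref{prop_equilibrium_equivalence} (which only requires concavity, not strict concavity or differentiability, and thus applies to the piecewise-linear $h(\cdot;\theta_i)$). Once reduced, the proof splits along the two pieces of the set $\mathscr{S}_{\ast}$ in \eqref{eq_set_3}, with each case handled by a direct subdifferential argument rather than the implicit-function characterization of Section~\ref{conceptual_MTES}, which is unavailable here because $h$ is neither strictly concave nor continuously differentiable.

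The first step is to write down the subdifferential of $h(\cdot;\theta_i)=\min\{\beta_i x_i,\phi_i\beta_i\}$: it equals $\{\beta_i\}$ on $[0,\phi_i)$, the interval $[0,\beta_i]$ at $x_i=\phi_i$, and $\{0\}$ on $(\phi_i,\infty)$. In particular $\partial h(x;\theta_i)\subseteq[0,\beta_i]$ for all $x\geq 0$. The KKT conditions for the convex program \eqref{opt_social_LD_1} then read: there exists $\lambda^{\ast}$ such that for every $i$, either $x_i^{\ast}>0$ and $\lambda^{\ast}\in\partial h(x_i^{\ast};\theta_i)$, or $x_i^{\ast}=0$ and $\lambda^{\ast}\geq\beta_i$. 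By the equivalence of equilibria, this $\lambda^{\ast}$ is exactly the optimal price of the competitive equilibrium.

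Next I would perform a case analysis according to \eqref{eq_set_3}. In the first branch, $\phi_{\max}<C/n$, giving $\sum_{i=1}^n\phi_i\leq n\phi_{\max}<C=\sum_{i=1}^n x_i^{\ast}$, so by a pigeonhole argument at least one index $j$ must satisfy $x_j^{\ast}>\phi_j$; for that index the KKT condition forces $\lambda^{\ast}\in\{0\}$, hence $\lambda^{\ast}=0\leq\lambda^{\dag}$. In the second branch, $\phi_{\max}\geq C/n$ and $\beta_{\max}\leq\lambda^{\dag}$; since $\sum x_i^{\ast}=C>0$, at least one index $j$ has $x_j^{\ast}>0$, and the KKT condition then places $\lambda^{\ast}\in\partial h(x_j^{\ast};\theta_j)\subseteq[0,\beta_j]$, yielding $\lambda^{\ast}\leq\beta_j\leq\beta_{\max}\leq\lambda^{\dag}$.

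The main obstacle I anticipate is not the case analysis itself, which is essentially pigeonhole plus a subgradient reading, but rather the rigorous justification that the dual variable of the balance constraint in \eqref{opt_social_LD_1} plays the role of the competitive price under merely concave, non-smooth utilities. I would handle this by verifying that Slater's condition holds (any allocation with $\sum x_i=C$ and $x_i>0$ is strictly feasible for the $x_i\geq 0$ constraint) so that strong duality applies, and by interpreting the KKT stationarity condition via subdifferentials as above. A minor secondary point worth checking is that the optimal $\lambda^{\ast}$ may fail to be unique when multiple agents simultaneously sit at their kink $x_i^{\ast}=\phi_i$; the argument in the second branch, however, produces an upper bound on every dual optimizer, so this non-uniqueness does not interfere with the conclusion $\lambda^{\ast}\leq\lambda^{\dag}$.
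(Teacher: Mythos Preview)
Your proposal is correct and follows essentially the same approach as the paper: the same two-case split on $\phi_{\max}$ versus $C/n$, with Case~(i) concluding $\lambda^{\ast}=0$ from the observation that some agent must consume beyond $\phi_j$ (the paper's Lemma~\ref{lemma8}), and Case~(ii) bounding $\lambda^{\ast}\leq\beta_j\leq\beta_{\max}\leq\lambda^{\dag}$ for some agent with positive consumption (the paper's Lemma~\ref{lemma10} plus the balance constraint). The only cosmetic difference is that you phrase the optimality conditions directly as KKT/subdifferential conditions on the social welfare problem, whereas the paper first characterizes the individual best response explicitly and then argues by contradiction.
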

\begin{proof}
See Appendix \ref{Appendix_C}.
\end{proof}


\noindent{\em Example 3}.  Consider a MTES where each agent $i$ is associated with a piece-wise linear utility function as defined in Assumption 2. 

The local resources ${a_i}$ of each agent is obtained following the same process described in Example 2.  The network generation production, $C$, is then, $C=\sum_{i=1}^{n}a_{i}.$ Let preferences for agent $1$, $(\beta_{1}, \phi_{1})$, be represented by $\beta_{1}=\lambda^{\dag}$ and let $\phi_{1}$ be a random number sampled from a normal distribution truncated to the interval $[\frac{C}{n}, \frac{10C}{n}]$. Let the personalize parameter pairs for the remaining agents be sampled from two uniform distributions, truncated to the interval $(0,\beta_{1}]$ and the interval $(0,\phi_{1}]$, respectively. In this way, we seek to validate the design of $(\beta_{\max}, \phi_{\max})=(\beta_{1}, \phi_{1})$ in the context of the range in \eqref{eq_set_3}.
		
Let $n=10000$ denote the number of agents, and let the upper limit $\lambda^{\dag}$ take values from $ \{20,22,24,26,28,30\}.$ For each upper limit $\lambda^{\dag}$, we carry out  $K=1000$ experiments. In each experiment $k =1,\dots,K$, a different parameter set of $(\bm{\beta}^{(k)}, \bm{\phi}^{(k)})$ is obtained upon the aforementioned parameter generation process. For each $(\bm{\beta}^{(k)}, \bm{\phi}^{(k)})$, we solve the social welfare optimization problem ~\eqref{opt_social_LD_1}, and the optimal dual variable corresponding to the equity constraint $\sum_{i=1}^{n}x_{i}=\sum_{i=1}^{n}a_{i}$ is obtained as $\lambda^{\ast}$. In Fig.~\ref{linear_piecewise_pricing}, we present the optimal pricing for piece-wise linear utility functions within the range \eqref{eq_set_3}, considering $1000$ numerical experiments for each upper limit $\lambda^{\ast} = 20,22,24,26,28,30$.  From Fig.~\ref{linear_piecewise_pricing}, we observe that the dotted line $\lambda^{\ast} = \lambda^{\dag}$, sits above all optimal prices identified from the numerical experiments, corresponding to the optimal price being socially acceptable. 

\begin{figure}[ht]
    \centering
    \includegraphics[width=0.35\textwidth]{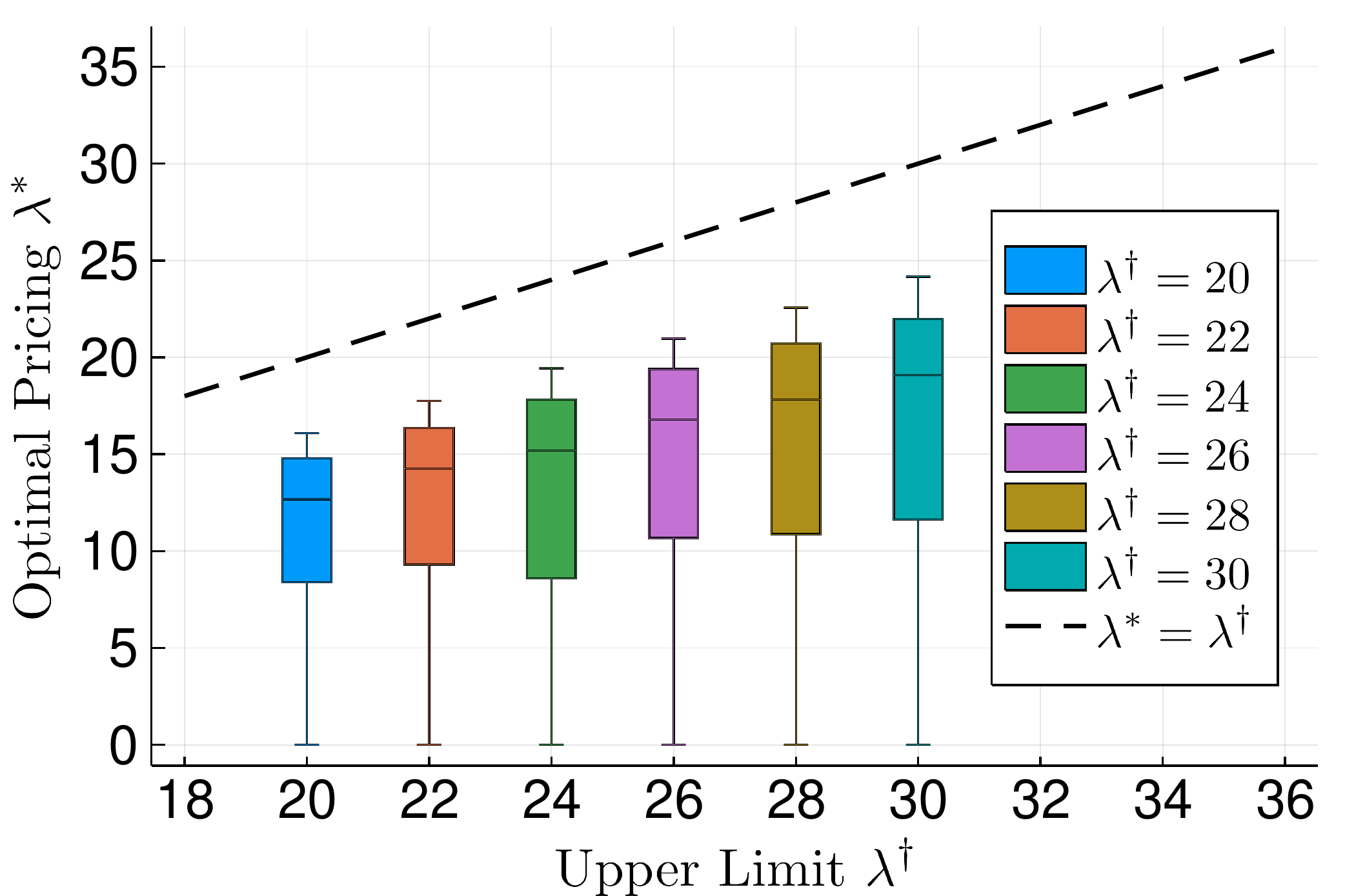}
    \caption{Optimal pricing for piece-wise linear utility functions within the range \eqref{eq_set_3}, considering upper limits $\lambda^{\ast} = 20,22,24,26,28,30.$} 
    \label{linear_piecewise_pricing}
\end{figure}

\begin{figure}[ht]
    \centering
    \includegraphics[width=0.35\textwidth]{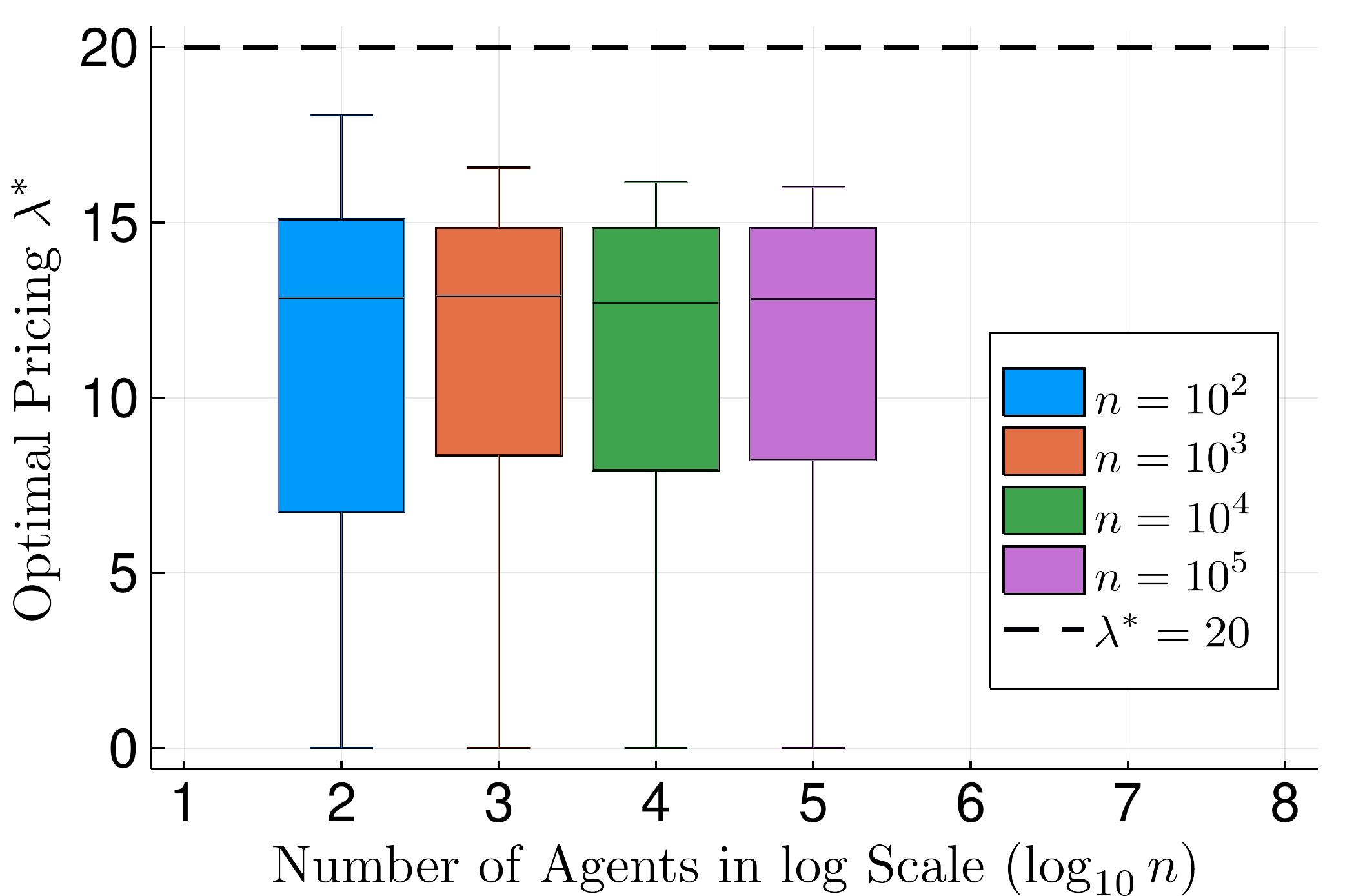}
    \caption{Optimal pricing for piece-wise linear functions within the range \eqref{eq_set_3}, for multi-agent systems of scale $n=100, 1000, 10000, 100000$.}
    \label{linear_piecewise_scale}
\end{figure}

Next, let $\lambda^{\dag}=20$ and let $n \in \{100,1000,10000,100000\}$. For each possible value of $n$, we conduct $K=1000$ experiments with the corresponding personalized parameter sets, $\theta_i$, obtained by applying the aforementioned process. We then solve the optimization problem~\eqref{opt_social_LD_1} and obtain the optimal price $\lambda^{\ast}$. In Fig.~\ref{linear_piecewise_scale}, we present the optimal prices for piece-wise linear functions within the range \eqref{eq_set_3}, considering the respective $1000$ experiments, for each system scale $n=100, 1000, 10000, 100000$. In Fig.~\ref{linear_piecewise_scale}, we observe that all optimal prices in the numerical experiments, considering various system scales, are also below the upper limit $\lambda^{\dag}=20$ required for social acceptance by the agents. From Fig.~\ref{linear_piecewise_pricing} and Fig.~\ref{linear_piecewise_scale}, we observe that piece-wise linear utility functions within the range \eqref{eq_set_3}, correspond to, and validate Theorem~\ref{theorem4}. \hfill$\square$

\subsection{Further Discussions: MTES-ST}
It is worth emphasizing the results in  Theorem \ref{theorem1} and Theorem \ref{theorem4} for MTES continue to be valid for MTES-ST. That is, the same set of preference parameters will enable socially acceptable optimal pricing under competitive equilibriums for MTES-ST. Specifically, from Proposition \ref{prop-equivalence},  when the price is {positive},  MTES and MTES-ST are exactly the same in terms of pricing and agent decisions. This equivalence is related to the second part of the sets proposed in \eqref{eq_set_1} and \eqref{eq_set_3}, i.e, $m_{\rm max} > \frac{C}{n}$ and $\phi_{\max} \geq \frac{C}{n}$, respectively. 

{Regarding the first part of the proposed sets, i.e, $m_{\rm max} \leq \frac{C}{n}$ or $\phi_{\max} < \frac{C}{n}$,   Lemmas \ref{lemma2} and \ref{lemma9}  in Appendix \ref{Appendix_D} show that under these conditions the optimal price for MTES-ST is equal to zero, and thus socially acceptable.} 

\section{Conclusions}\label{rom6}
This paper has considered the problem of shaping agent utility functions in a transactive energy system. In the overall system design, energy supply and demand is balanced while incorporating individual utility preferences for agents, in a way that accommodates socially acceptable pricing. In defining the social shaping problem, we identified a set of preference parameters to provide guarantees that the optimal energy price is below a prescribed threshold. Our established conceptual results indicated that the social shaping problem is characterized by a set decision problem. We also presented two analytical solutions in tight ranges for coefficients of  linear-quadratic utilities and piece-wise linear utilities, to further demonstrate the application of the social shaping problem to socially acceptable electricity pricing. Future work might include extensions of the social shaping ideas to transactive energy systems operating dynamically, or systems that operate under game-theoretic frameworks. 

\section*{Acknowledgements}
This work was supported by the Australian Research Council under Grants DP180101805 and DP190103615.


\section{Appendix}

\subsection{Proof of Proposition 2}\label{Appendix_A1}
    It is assumed $h(\cdot;\theta_i)$ is concave. Therefore, Proposition~\ref{prop_equilibrium_equivalence} states that the social welfare equilibrium and the competitive equilibrium coincide. Consequently, either the social welfare problem or the competitive problem can be solved.
	Here, we consider the competitive optimization problem of MTES-ST in \eqref{opt_LTD_1}.
	The inequality constraint $x_i + e_i \leq a_i$ can be written as
	\begin{equation}\label{eq15}
		x_i + e_i + s_i = a_i,
	\end{equation}
	where $s_i \geq 0$ is the slack variable. Additionally, substituting $e_i = a_i - x_i - s_i$ into \eqref{opt_LTD_1} yields an equivalent form for the optimization problem as 
	\begin{equation}\label{opt_LTD_4}
		\begin{aligned}
			\max_{{x}_i, s_i} \quad &  h(x_i; \theta_i)+\lambda^\ast_2 (a_i - x_i -s_i) \\
			\quad & x_i, s_i\in \mathbb{R}^{\geq0}.
		\end{aligned}
	\end{equation}
	Let $\lambda^\ast_2 >0$. Then, the objective function in \eqref{opt_LTD_4} is strictly decreasing with respect to $s_i$. Consequently, in order for \eqref{opt_LTD_4} to be maximized, $s_i$ must be minimized, i.e., $s_i^\ast =0$. This implies that the inequality constraint $x_i + e_i \leq a_i$ is active and  
	\begin{equation}\label{eq21}
		x_i^\ast + e_i^\ast = a_i.
	\end{equation}
	Taking the summation in \eqref{eq21} implies
	\begin{equation}\label{eq22}
		\sum_{i=1}^{n}x_i^\ast + \sum_{i=1}^{n}e_i^\ast = C,
	\end{equation}
	where $C= \sum_{i=1}^{n}a_i$. Then, substituting the balancing equality $\sum_{i=1}^{n} e_i^\ast =0$ in \eqref{trading_demand_supply_constraints} into \eqref{eq22} yields
	\begin{equation}\label{eq23}
		\sum_{i=1}^n x_i^\ast =C.
	\end{equation} 
	Furthermore, substituting $s_i=0$ into \eqref{opt_LTD_4} yields
	\begin{equation}\label{opt_LTD_5}
		\begin{aligned}
			\max_{{x}_i} \quad &  h(x_i; \theta_i)+\lambda^\ast_2 (a_i - x_i) \\
			\quad & x_i\in \mathbb{R}^{\geq0}.
		\end{aligned}
	\end{equation}
	Comparing \eqref{opt_LTD_5} and \eqref{eq23} with \eqref{opt_LD_1} and \eqref{load_demand_supply_constraints} implies that the problem of MTES-ST is equivalent to the problem of MTES. Therefore, the agent decisions for MTES and MTES-ST are the same for the respective competitive equilibriums.  A similar analysis can be done to show $\lambda_{1}^\ast = \lambda_{2}^\ast$.

\subsection{Proof of Theorem 2}\label{Appendix_A}
Since $h(\cdot;\theta_i)$ is concave, Proposition~\ref{prop_equilibrium_equivalence} holds. Therefore, either the social welfare problem or the competitive problem can be examined. Here, we consider the competitive optimization problem of MTES in \eqref{opt_LD_1}. The optimal solution $x_i^\ast$ solves the following optimization problem
\begin{equation}\label{opt_MTES_homogenous}
	\begin{aligned}
		\max_{{x}_i} \quad &  h(x_i; \theta) +  \lambda^\ast(a_i - x_i) \\
		{\rm s.t.} \quad & x_i\in \mathbb{R}^{\geq0}.
	\end{aligned}
\end{equation}
Let $\hat x_i$ be the value $x_i$ which maximizes the objective function in the absence of any constraints. Then $\hat x_i$ is obtained when the derivative of the objective function equals zero, i.e, 
\begin{equation}\label{eq_gradient_MTES_homogenous}
	 h' (\hat x_i; \theta) = \lambda^\ast.
\end{equation}
This implies that all agents have the same $\hat x_i = \hat x$. Considering the inequality constraint $x_i\in \mathbb{R}^{\geq0}$ in \eqref{opt_MTES_homogenous}, if $\hat x \leq 0$ then $x_i^\ast =0$ for all agents, which contradicts the balancing equality $\sum_{i=1}^{n} x_i^\ast = C$ in \eqref{load_demand_supply_constraints} (note that $C>0$). Consequently, it follows  $\hat x >0$ and $x_i^\ast = \hat x$, which is positive and satisfies the inequality constraint. Additionally, from the balancing equality $\sum_{i=1}^{n} x_i^\ast = C$, we yield $x_i^\ast = \hat x = \frac{C}{n}$. Then, substituting $\hat x = \frac{C}{n}$ into \eqref{eq_gradient_MTES_homogenous} obtains
\begin{equation}\label{eq33}
	 h' (C/n; \theta) = \lambda^\ast.
\end{equation}
If $\theta$ is selected in such a way that $ h' (C/n; \theta) \leq \lambda^\dag$, then we yield $\lambda^\ast \leq \lambda^\dag$. This leads to the set $\Theta$ in \eqref{eq_set_MTES_homogenous}.

\subsection{Proof of Theorem~4}\label{Appendix_B}

\subsubsection{Preliminary Lemmas}
We first introduce some preliminary lemmas which are essential for the proof Theorem \ref{theorem1}.
\begin{lemma}\label{lemma4}
Consider the MTES with the quadratic utility function defined in Assumption 1. The optimal load allocation $x_i^\ast$, which is an optimal solution of the optimization problem \eqref{opt_LD_1}, is such that 
	\begin{equation}\label{eq_x}
		x_i^\ast = \max \left\{ m_i - \frac{\lambda^\ast}{b_i}, 0\right\}.
	\end{equation}
\end{lemma}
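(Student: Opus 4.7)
The plan is to directly analyze the individual agent optimization problem \eqref{opt_LD_1} under the quadratic parameterization in Assumption~1. Substituting $h(x_i;\theta_i)=-\tfrac12 b_i x_i^2 + m_i b_i x_i$ into the objective of \eqref{opt_LD_1} gives a scalar optimization over $x_i\in\mathbb{R}^{\geq 0}$ with objective $-\tfrac12 b_i x_i^2 + m_i b_i x_i + \lambda^\ast(a_i-x_i)$, which, since $b_i>0$ by Assumption~1, is strictly concave in $x_i$. Thus the KKT conditions are both necessary and sufficient for optimality.

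First I would compute the unconstrained stationary point. Differentiating the objective with respect to $x_i$ and setting the result to zero yields $-b_i x_i + m_i b_i - \lambda^\ast = 0$, which gives the candidate $\hat x_i = m_i - \lambda^\ast/b_i$. Second, I would argue by cases using the nonnegativity constraint $x_i \geq 0$. If $\hat x_i \geq 0$, then $\hat x_i$ is feasible and, by strict concavity, is the unique maximizer, so $x_i^\ast = m_i - \lambda^\ast/b_i$. If instead $\hat x_i < 0$, then strict concavity implies the objective is strictly decreasing on $[0,\infty)$ (because its derivative at $x_i=0$ equals $m_i b_i - \lambda^\ast = b_i \hat x_i < 0$, and it is nonincreasing for larger $x_i$), so the constrained maximum on $\mathbb{R}^{\geq 0}$ occurs at the boundary $x_i^\ast = 0$.

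Combining the two cases gives the compact formula $x_i^\ast = \max\{m_i - \lambda^\ast/b_i,\, 0\}$ claimed in \eqref{eq_x}. There is no real obstacle here: the only points that require care are verifying strict concavity (which follows from $b_i>0$) and confirming that the boundary case $\hat x_i < 0$ indeed forces the optimum to zero rather than producing an interior maximizer, which is immediate from the sign of the derivative at the boundary. Since Proposition~\ref{prop_equilibrium_equivalence} already equates the competitive and social welfare equilibriums under concavity, the result characterizes $x_i^\ast$ at any equilibrium price $\lambda^\ast$, which is exactly what subsequent arguments (substitution into \eqref{load_demand_supply_constraints} to derive bounds on $\lambda^\ast$) will require.
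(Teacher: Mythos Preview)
Your proposal is correct and follows essentially the same approach as the paper: compute the unconstrained maximizer $\hat x_i = m_i - \lambda^\ast/b_i$ by first-order conditions, then split into the cases $\hat x_i\geq 0$ (feasible, hence optimal by strict concavity) and $\hat x_i<0$ (objective strictly decreasing on $\mathbb{R}^{\geq 0}$, so $x_i^\ast=0$). The paper phrases the case split in terms of $\lambda^\ast \lessgtr m_i b_i$, but this is exactly your condition on the sign of $\hat x_i$.
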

\begin{proof}
	Rearranging the optimization problem \eqref{opt_LD_1}, $x_i^\ast$ is the solution to the following maximization problem:
	\begin{equation}\label{opt_LD_2}
		\begin{aligned}
			\max_{{x}_i} \quad &  -\frac{1}{2} b_i x_i^2 + (m_i b_i - \lambda^\ast) x_i + \lambda^\ast a_i \\
			{\rm s.t.} \quad & x_i\in \mathbb{R}^{\geq0}.
		\end{aligned}
	\end{equation}
	Let $\hat x_i$ be the value $x_i$ which maximizes the objective function in the absence of any constraints. Then $\hat x_i$ is obtained when the derivative of the objective function equals zero. That is,
	\begin{equation}
		-b_i \hat x_i + (m_ib_i - \lambda^\ast) = 0,
	\end{equation}
	which implies $\hat x_i = m_i - \frac{\lambda^\ast}{b_i}$. Considering the inequality constraint $x_i \geq 0$ in the maximization problem \eqref{opt_LD_2}, when $\lambda^\ast \leq m_i b_i$, the solution  is achieved at $x_i^\ast = \hat x_i = m_i - \frac{\lambda^\ast}{b_i}$ which is non-negative and satisfies the inequality constraint. Conversely, when $\lambda^\ast > m_i b_i$, then $\hat x_i$ is negative which does not satisfy the inequality constraint, so $x_i^\ast \neq \hat x_i$. In this case, the objective function is strictly decreasing with respect to $x_i$. Consequently, in order for the objective function to be maximized, $x_i$ must be minimized, i.e., $x_i^\ast = 0$. Therefore, when $\lambda^\ast > m_i b_i$ then $x_i^\ast =0$, otherwise, $x_i^\ast = m_i - \frac{\lambda^\ast}{b_i}$.
\end{proof}

\begin{lemma}\label{lemma1}
	Consider the MTES with the quadratic utility function defined in Assumption 1. If $\sum_{i=1}^{n}m_i \leq C$,  then $\lambda^\ast \leq 0$. Conversely, if $\sum_{i=1}^{n}m_i > C$ then $\lambda^\ast >0$.
\end{lemma}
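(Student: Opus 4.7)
My plan is to use Lemma \ref{lemma4} to get an explicit formula for each $x_i^\ast$ in terms of $\lambda^\ast$, substitute it into the balance equation \eqref{load_demand_supply_constraints}, and then derive both directions by sign-based contradictions. Since each $h(\cdot;\theta_i)$ is concave, Proposition \ref{prop_equilibrium_equivalence} guarantees a competitive equilibrium $(\lambda^\ast,\mathbf{x}^\ast)$ exists (as the Lagrange multiplier of the strictly feasible social welfare problem), so I treat $\lambda^\ast$ as well-defined throughout.

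For the first direction, assume $\sum_{i=1}^{n} m_i \leq C$ and argue by contradiction that $\lambda^\ast > 0$ is impossible. The key observation is that from \eqref{eq_x} we always have
\begin{equation*}
x_i^\ast \;=\; \max\!\left\{m_i - \tfrac{\lambda^\ast}{b_i},\,0\right\} \;\leq\; m_i,
\end{equation*}
because $b_i>0$ and $\lambda^\ast>0$ would force $m_i - \lambda^\ast/b_i < m_i$, while the other branch gives $0 \leq m_i$. Summing and invoking \eqref{load_demand_supply_constraints} yields $C = \sum_i x_i^\ast \leq \sum_i m_i \leq C$, so every inequality must be tight. Tightness in $x_i^\ast \leq m_i$ with $m_i > 0$ rules out the branch $x_i^\ast = 0$ and forces $m_i - \lambda^\ast/b_i = m_i$, i.e. $\lambda^\ast = 0$, contradicting $\lambda^\ast > 0$. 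Hence $\lambda^\ast \leq 0$.

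For the converse, assume $\sum_{i=1}^n m_i > C$ and suppose for contradiction that $\lambda^\ast \leq 0$. Then for every $i$, $m_i - \lambda^\ast/b_i \geq m_i > 0$, so the $\max$ in \eqref{eq_x} is attained by the first branch and $x_i^\ast = m_i - \lambda^\ast/b_i$ for all $i$. Plugging into the balance equation gives
\begin{equation*}
C \;=\; \sum_{i=1}^n x_i^\ast \;=\; \sum_{i=1}^n m_i \;-\; \lambda^\ast \sum_{i=1}^n \tfrac{1}{b_i},
\end{equation*}
which rearranges to $\lambda^\ast \sum_i 1/b_i = \sum_i m_i - C > 0$. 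Since $\sum_i 1/b_i > 0$, this forces $\lambda^\ast > 0$, contradicting our assumption. Hence $\lambda^\ast > 0$.

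I do not expect a serious obstacle. The only delicate point is the tightness argument in the boundary case $\sum_i m_i = C$, where one must rule out the possibility that the bound $x_i^\ast \leq m_i$ is saturated via the zero branch; this is handled cleanly by the standing assumption $m_i > 0$ in Assumption 1. Both directions then reduce to a one-line contradiction from the balance equation.
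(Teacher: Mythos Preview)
Your proof is correct and follows essentially the same approach as the paper: both directions are handled by contradiction, using the formula \eqref{eq_x} from Lemma~\ref{lemma4} together with the balance constraint \eqref{load_demand_supply_constraints}. The only cosmetic difference is that in part~(i) the paper observes directly that $\lambda^\ast>0$ and $m_i>0$ force the \emph{strict} inequality $x_i^\ast<m_i$ in both branches of the $\max$, yielding $\sum_i x_i^\ast<C$ immediately; your detour through the tightness argument is unnecessary (though correct), since the standing assumption $m_i>0$ already rules out $x_i^\ast=m_i$ without further work.
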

\begin{proof}
	(i) Consider the case $\sum_{i=1}^{n}m_i \leq C$. By contradiction, suppose $\lambda^\ast >0$. From equation \eqref{eq_x} we yield $x_i^\ast < m_i$, and therefore, $\sum_{i=1}^{n}x_i^\ast < \sum_{i=1}^{n}m_i$. Since $\sum_{i=1}^{n}m_i \leq C$, we obtain 
	\begin{equation}
		\sum_{i=1}^{n} x_i^\ast < C,
	\end{equation}
	which contradicts the balancing equality $\sum_{i=1}^n x_i^\ast = C$ in \eqref{load_demand_supply_constraints}. Therefore, it follows that $\lambda^\ast \leq 0$.
	
	(ii) Consider the case $\sum_{i=1}^{n}m_i > C$. By contradiction, suppose $\lambda^\ast \leq 0$. From equation \eqref{eq_x} we obtain $x_i^\ast \geq m_i$, and therefore, $\sum_{i=1}^{n}x_i^\ast \geq \sum_{i=1}^{n}m_i$. Since $\sum_{i=1}^{n}m_i > C$, we yield
	\begin{equation}
		\sum_{i=1}^{n} x_i^\ast > C,
	\end{equation}
	which contradicts the balancing equality $\sum_{i=1}^n x_i^\ast = C$ in \eqref{load_demand_supply_constraints}. Therefore, it follows that $\lambda^\ast > 0$.
\end{proof}

Consider two vectors $k=(k_1, ..., k_n)$ and 
$k'=(k'_1, ..., k'_n)$, and let $k \preceq k'$ denote $k_i \leq k_i'$ for all $i \in V$. Let $m=(m_1, ..., m_n)$ and $b=(b_1, ..., b_n)$. Suppose $\lambda^\ast$ is the optimal price associated with the pair of vectors $(m, b)$, and let $\lambda^{\ast'}$ be the optimal price associated with $(m', b')$.

\begin{lemma}\label{lemma3}
	Consider the MTES with the quadratic utility function defined in Assumption 1. If $\lambda^\ast > 0$, then $m \preceq m'$ and $b \preceq b'$ yield $\lambda^\ast \leq \lambda^{\ast'}$.
\end{lemma}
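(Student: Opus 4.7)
The plan is to argue by contradiction, leveraging the explicit characterization of $x_i^\ast$ from Lemma~\ref{lemma4} together with the supply-demand balance constraint~\eqref{load_demand_supply_constraints}. I would suppose toward a contradiction that $\lambda^{\ast'} < \lambda^\ast$, and show this forces $\sum_{i=1}^n x_i^{\ast'} > C$, contradicting the balance constraint for the primed instance and thereby establishing $\lambda^{\ast'} \geq \lambda^\ast$.

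First I would record the two optimal consumption profiles via Lemma~\ref{lemma4}:
\[
x_i^\ast = \max\!\Bigl\{m_i - \tfrac{\lambda^\ast}{b_i},\, 0\Bigr\}, \qquad x_i^{\ast'} = \max\!\Bigl\{m_i' - \tfrac{\lambda^{\ast'}}{b_i'},\, 0\Bigr\}.
\]
The crux of the argument is the pointwise comparison $\lambda^{\ast'}/b_i' < \lambda^\ast/b_i$ for every $i \in \mathrm{V}$ under the contradictory hypothesis. I would split on the sign of $\lambda^{\ast'}$: when $\lambda^{\ast'} \geq 0$, the chain $\lambda^{\ast'}/b_i' \leq \lambda^{\ast'}/b_i < \lambda^\ast/b_i$ follows from $0 < b_i \leq b_i'$ together with $\lambda^{\ast'} < \lambda^\ast$; when $\lambda^{\ast'} < 0$, the left-hand side is negative while the right-hand side is strictly positive since $\lambda^\ast > 0$ and $b_i > 0$.

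Combining this inequality with $m_i \leq m_i'$ yields $m_i' - \lambda^{\ast'}/b_i' > m_i - \lambda^\ast/b_i$, so by the monotonicity of $\max\{\cdot,0\}$ we obtain $x_i^{\ast'} \geq x_i^\ast$ for every $i$, with strict inequality on any index where $x_i^\ast > 0$. Since $C>0$, the balance $\sum_{i=1}^n x_i^\ast = C$ guarantees at least one such index exists. Summing then gives $\sum_{i=1}^n x_i^{\ast'} > C$, contradicting $\sum_{i=1}^n x_i^{\ast'} = C$, as required.

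The step I expect to need the most care is the case split on the sign of $\lambda^{\ast'}$. The positivity hypothesis $\lambda^\ast > 0$ is used precisely to handle the subcase $\lambda^{\ast'} < 0$, where the raw comparison $\lambda^{\ast'} < \lambda^\ast$ together with $b_i \leq b_i'$ does not by itself yield $\lambda^{\ast'}/b_i' < \lambda^\ast/b_i$; without $\lambda^\ast > 0$ one would need an additional sign argument, and Lemma~\ref{lemma1} could be invoked as a backup to rule out degenerate configurations of $\sum_i m_i$. Aside from this bookkeeping, the remainder of the argument is a routine monotonicity manipulation of the closed-form expression supplied by Lemma~\ref{lemma4}.
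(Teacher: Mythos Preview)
Your proof is correct and follows essentially the same monotonicity-by-contradiction approach as the paper: both arguments substitute the closed form from Lemma~\ref{lemma4} into the balance equation~\eqref{load_demand_supply_constraints} and observe that if $\lambda^{\ast'}$ failed to increase with $(m,b)$ the left-hand side would exceed $C$. Your version is more careful than the paper's, which simply asserts the monotonicity without the sign case-split on $\lambda^{\ast'}$ or the explicit identification of a strictly increasing index; your added rigor is appropriate and fills in exactly the steps the paper leaves implicit.
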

\begin{proof}
	Suppose $\lambda^\ast > 0$. Substituting  \eqref{eq_x} into the balancing equality $\sum_{i=1}^n x_i^\ast = C$ in \eqref{load_demand_supply_constraints} yields
	\begin{equation}\label{eq2}
		\sum_{i=1}^n \max \left\{ m_i - \frac{ \lambda^\ast}{b_i}, 0\right\} =C.
	\end{equation} 
	As $m_i$ and $b_i$ increase, $\lambda^\ast$ must also increase so as to compensate for the change --- ensuring the balancing equality \eqref{eq2} holds. Otherwise, the left-hand side of equality \eqref{eq2}  would increase, while the right-hand side remains constant, and so the equality would not hold.
\end{proof} 
 
\subsubsection{Proof of the Theorem}

Now, we present the proof of Theorem \ref{theorem1}.
	The quadratic utility function in Assumption 1 is concave, so Proposition~\ref{prop_equilibrium_equivalence} holds. We investigate two cases. 
			
	Case (i) $m_{\rm max} \leq \frac{C}{n}$. In this case, $m_i \leq m_{\rm max}$ implies $m_{i} \leq \frac{C}{n}$ for $i \in V$,  and therefore, $\sum_{i=1}^{n}m_i \leq C$. Consequently, Lemma \ref{lemma1} implies $\lambda^\ast \leq 0$. Since $\lambda^\dag >0$, one obtains $\lambda^\ast < \lambda^\dag$. Therefore, $\lambda^\ast$ is socially resilient. 
		
	Case (ii) $m_{\rm max} > \frac{C}{n}$ and $b_{\rm max} \leq \frac{n \lambda^\dag}{nm_{\rm max}-C}$. If $\lambda^\ast \leq 0$, then it is socially resilient. Conversely,
	if $\lambda^\ast >0$, Lemma \ref{lemma3} yields $\lambda^\ast$ is monotonically increasing with respect to $m_i$ and $b_i$, so the highest possible price $\lambda^\ast_{\rm max}$ is achieved when $m_i=m_{\rm max}$ and $b_i = b_{\rm max}$ for all agents $i \in V$. Consequently, when all agents select $m_i= m_{\rm max}$ and $b_i =  b_{\rm max}$, the balancing equality \eqref{eq2} results in
	\begin{equation}
		n \left( m_{\rm max} - \frac{\lambda^\ast_{\rm max}}{ b_{\rm max}}\right) = C,
	\end{equation}
	and therefore,
	\begin{equation}\label{eq3}
		\lambda^\ast_{\rm max} = b_{\rm max} \left( \frac{nm_{\rm max} -C}{n}\right).
	\end{equation}
	From equation \eqref{eq3}, along with the assumption $ b_{\rm max} \leq \frac{n \lambda^\dag}{nm_{\rm max}-C}$ in \eqref{eq_set_1}, yields $\lambda^\ast_{\rm max} \leq \lambda^\dag$. Since $\lambda^\ast \leq \lambda^\ast_{\rm max}$, one obtains $\lambda^\ast \leq \lambda^\dag$. 
		
	Considering  (i) and  (ii), it follows that as long as  $(b_{\rm max}, m_{\rm max})$ is constrained in the set $\mathscr{S}_{\ast}$ in \eqref{eq_set_1}, $\lambda^\ast$ will be socially resilient.

\subsection{Proof of Theorem 5}\label{Appendix_C}

\subsubsection{Preliminary Lemmas}
First, we provide some lemmas which are necessary for the proof of Theorem \ref{theorem4}.
\begin{lemma}\label{lemma6}
	Consider the MTES. If Assumption 2 holds, then $\lambda^\ast \geq 0$.
\end{lemma}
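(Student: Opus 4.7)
The plan is to argue by contradiction, leveraging the fact that piece-wise linear utilities are bounded from above. Since $h(\cdot;\theta_i)=\min\{\beta_i x_i,\phi_i\beta_i\}$ is concave (as the pointwise minimum of two affine functions), Proposition~\ref{prop_equilibrium_equivalence} applies, so the competitive equilibrium and the social welfare equilibrium coincide. In particular, at any such equilibrium each $x_i^\ast$ must be a maximizer of the individual payoff $h(x_i;\theta_i)+\lambda^\ast(a_i-x_i)$ over $x_i\in\mathbb{R}^{\geq 0}$, as prescribed by \eqref{opt_LD_1}.

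Suppose for contradiction that $\lambda^\ast<0$. Fix any agent $i$ and consider $x_i\geq\phi_i$; by the definition of the utility in Assumption~2, one has $h(x_i;\theta_i)=\phi_i\beta_i$, so the individual payoff collapses to the affine function $\phi_i\beta_i+\lambda^\ast a_i-\lambda^\ast x_i$, which is strictly increasing in $x_i$ because $-\lambda^\ast>0$. Letting $x_i\to\infty$ therefore drives the individual payoff to $+\infty$, so no finite maximizer of \eqref{opt_LD_1} exists. This contradicts the existence of $x_i^\ast$ at the competitive equilibrium (which is guaranteed via Proposition~\ref{prop_equilibrium_equivalence} and the existence of the social welfare maximizer on the compact simplex $\{\mathbf{x}\in\mathbb{R}^n_{\geq 0}:\sum_i x_i=C\}$). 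Hence $\lambda^\ast\geq 0$.

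There is no genuinely hard step here: the entire argument rests on the saturation of $h(\cdot;\theta_i)$ beyond the breakpoint $x_i=\phi_i$, which leaves the individual payoff linear—rather than coercive—in $x_i$ for large $x_i$. This structural feature explains why a negative optimal price is incompatible with piece-wise linear utilities, in sharp contrast with the linear-quadratic setting of Theorem~\ref{theorem1}, where the quadratic curvature keeps the individual objective coercive for any sign of $\lambda^\ast$ and thus admits negative prices in general.
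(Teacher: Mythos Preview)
Your proof is correct. The paper's own proof is a one-line citation: it observes that $h(x_i;\theta_i)=\min\{\beta_i x_i,\phi_i\beta_i\}$ is a non-decreasing concave function and then invokes Proposition~1 of \cite{CDC}, which asserts $\lambda^\ast\geq 0$ for such utilities in general. Your argument instead works from first principles: you exploit the saturation of $h(\cdot;\theta_i)$ beyond the breakpoint $x_i=\phi_i$ to show directly that the individual objective in \eqref{opt_LD_1} is unbounded above whenever $\lambda^\ast<0$, and hence no maximizer can exist. This is a genuinely different route: the paper outsources the mechanism to an external result, while you make the mechanism explicit and self-contained for the piece-wise linear case. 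The advantage of your approach is that it requires no outside reference and exposes exactly why nonnegativity holds here; the advantage of the paper's approach is brevity and generality (the cited proposition covers all non-decreasing concave utilities, not just this family). Incidentally, your unboundedness argument would go through for any non-decreasing $h$, since $h(x_i)-\lambda^\ast x_i\geq h(0)-\lambda^\ast x_i\to+\infty$ when $\lambda^\ast<0$; the saturation is convenient but not essential.
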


\begin{proof}
Note that $h (x_i; \theta_i)$ is a non-decreasing concave function. Therefore, according to Proposition~1 in \cite{CDC}, we yield $\lambda^\ast \geq 0$.	
\end{proof}

\begin{lemma}\label{lemma10}
Consider the MTES with the piece-wise linear utility function defined in Assumption 2. The optimal load allocation $x_i^\ast$, which is an optimal solution of the optimization problem \eqref{opt_LD_1}, satisfies 
	\begin{equation}\label{eq_x_linear}
		x_i^* = \left\{ {\begin{array}{ll}
				\left\{ {x: x \geq {\phi_i}} \right\} & \mbox{if $\lambda^\ast =0$,}\\
				{\phi_i} & \mbox{if $0<\lambda^\ast < \beta_i$,}\\
				\left\{ {x:0 \le x \le {\phi_i}} \right\} & \mbox{if $\lambda^\ast = \beta_i$,}\\
				0 & \mbox{if $\lambda^\ast > \beta_i$.}
		\end{array}} \right. 
	\end{equation}
\end{lemma}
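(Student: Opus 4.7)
\textbf{Proof plan for Lemma \ref{lemma10}.}
The plan is to reduce the constrained maximization in \eqref{opt_LD_1} to a one-dimensional analysis of a piecewise-linear function, and then perform a case split on the slope $\lambda^\ast$ relative to the two slopes $0$ and $\beta_i$ of $h(\cdot;\theta_i)$. Substituting \eqref{eq_utility_linear} into the objective of \eqref{opt_LD_1} and dropping the constant $\lambda^\ast a_i$, it suffices to maximize
\begin{equation*}
g_i(x_i) \;:=\; \min\{\beta_i x_i,\; \phi_i\beta_i\} \;-\; \lambda^\ast x_i
\end{equation*}
over $x_i\in\mathbb{R}^{\geq 0}$. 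The function $g_i$ splits at the kink $x_i=\phi_i$ into two affine pieces: $g_i(x_i)=(\beta_i-\lambda^\ast)x_i$ on $[0,\phi_i]$ and $g_i(x_i)=\phi_i\beta_i-\lambda^\ast x_i$ on $[\phi_i,\infty)$. Because Lemma \ref{lemma6} gives $\lambda^\ast\geq 0$, only the four cases listed in \eqref{eq_x_linear} arise.

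Next I would dispatch the four cases by directly comparing the slopes of the two linear pieces. \emph{(a)} If $\lambda^\ast=0$, the left piece has slope $\beta_i>0$ and the right piece is the constant $\phi_i\beta_i$, so every $x_i\geq\phi_i$ attains the maximum $\phi_i\beta_i$, giving the first branch. \emph{(b)} If $0<\lambda^\ast<\beta_i$, the left piece is strictly increasing and the right piece is strictly decreasing, with the two pieces agreeing at $x_i=\phi_i$; hence the unique maximizer is $x_i^\ast=\phi_i$. \emph{(c)} If $\lambda^\ast=\beta_i$, the left piece is identically zero and the right piece is strictly decreasing and non-positive, so every $x_i\in[0,\phi_i]$ is optimal. \emph{(d)} If $\lambda^\ast>\beta_i$, both pieces are strictly decreasing in $x_i$, so the constraint $x_i\geq 0$ is active and $x_i^\ast=0$.

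The argument is essentially a slope comparison, so I do not anticipate any serious obstacle; the only place where care is needed is the boundary case $\lambda^\ast=\beta_i$, where the objective is constant on $[0,\phi_i]$ and therefore yields a set-valued (rather than singleton) optimizer. To avoid confusion in that case, I would explicitly observe that on the right piece $g_i(x_i)=\beta_i(\phi_i-x_i)\leq 0$ with equality only at $x_i=\phi_i$, ensuring the optimum value $0$ is attained exactly on $[0,\phi_i]$ and matches the third branch of \eqref{eq_x_linear}. Combining the four cases yields the stated characterization.
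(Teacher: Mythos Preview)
Your proposal is correct and follows essentially the same approach as the paper: both drop the constant $\lambda^\ast a_i$, split the objective at the kink $x_i=\phi_i$ into its two affine pieces, and dispatch the four cases of \eqref{eq_x_linear} by comparing the signs of the slopes $\beta_i-\lambda^\ast$ and $-\lambda^\ast$ on $[0,\phi_i]$ and $[\phi_i,\infty)$ respectively. Your version is slightly more explicit (naming $g_i$, writing out the slopes, and invoking Lemma~\ref{lemma6} to explain why $\lambda^\ast<0$ need not be considered), but the underlying argument is identical.
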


\begin{proof}
	We investigate four cases.
	
	Case (i) $\lambda^\ast = 0$. In this case, the objective function in \eqref{opt_LD_1} equals the utility function $h(x_i; \theta_i) = \min \{\beta_i x_i, \phi_i \beta_i\}$, shown in Fig. \ref{fig_utility_linear}. As can be seen, the objective function is strictly increasing in the interval $x_i \in [0, \phi_i]$, while constant in the interval $x_i \in [\phi_i, \infty)$. Therefore, the optimal solution is achieved at $x_i^\ast \geq \phi_i$.
	
	Case (ii) $0<\lambda^\ast < \beta_i$. In this case, the objective function is strictly increasing in the interval $x_i \in [0, \phi_i]$, while strictly decreasing in the interval $x_i \in [\phi_i, \infty)$. Therefore, the optimal solution is achieved at $x_i^\ast=\phi_i$.
	
	Case (iii) $\lambda^\ast = \beta_i$. In this case, the objective function is constant in the interval $x_i \in [0, \phi_i]$, while strictly decreasing in the interval $x_i \in [\phi_i, \infty)$. Therefore, the optimal solution is achieved at $0 \leq x_i^\ast \leq \phi_i$.
	
	Case (iv) $\lambda^\ast > \beta_i$. In this case, the objective function is strictly decreasing in the whole interval $x_i \in [0, \infty)$. Therefore, the optimal solution is achieved at $x_i^\ast =0$.
\end{proof}

\begin{lemma}\label{lemma8}
	Consider the MTES with the piece-wise linear utility function defined in Assumption 2. If $\sum_{i=1}^{n}\phi_i < C$, then $\lambda^\ast = 0$. Conversely, if $\sum_{i=1}^{n}\phi_i > C$ then $\lambda^\ast > 0$.
\end{lemma}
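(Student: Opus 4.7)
The plan is to prove the two implications by contradiction, leveraging the case analysis from Lemma \ref{lemma10} combined with the non-negativity of $\lambda^\ast$ from Lemma \ref{lemma6} and the supply–demand balance equality \eqref{load_demand_supply_constraints}. The proof should be largely computational; all the structural work has already been done by the preceding lemmas.

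For the first implication, suppose $\sum_{i=1}^n \phi_i < C$ and assume for contradiction that $\lambda^\ast \neq 0$. Since Lemma \ref{lemma6} gives $\lambda^\ast \geq 0$, we must have $\lambda^\ast > 0$. Then, for each agent $i$, exactly one of cases (ii), (iii), (iv) in Lemma \ref{lemma10} applies, and in every such case the optimal load satisfies $x_i^\ast \leq \phi_i$. Summing over $i \in \mathrm{V}$ yields
\begin{equation*}
\sum_{i=1}^n x_i^\ast \;\leq\; \sum_{i=1}^n \phi_i \;<\; C,
\end{equation*}
which contradicts the balancing equality $\sum_{i=1}^n x_i^\ast = C$ in \eqref{load_demand_supply_constraints}. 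Hence $\lambda^\ast = 0$.

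For the converse, suppose $\sum_{i=1}^n \phi_i > C$ and assume for contradiction that $\lambda^\ast = 0$. Then case (i) in Lemma \ref{lemma10} applies for every $i$, so any optimal load allocation must satisfy $x_i^\ast \geq \phi_i$. Summing gives
\begin{equation*}
\sum_{i=1}^n x_i^\ast \;\geq\; \sum_{i=1}^n \phi_i \;>\; C,
\end{equation*}
again contradicting \eqref{load_demand_supply_constraints}. Combined with Lemma \ref{lemma6}, which rules out $\lambda^\ast < 0$, we conclude $\lambda^\ast > 0$.

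I do not anticipate a serious obstacle; the only subtlety to be careful about is that cases (i) and (iii) in Lemma \ref{lemma10} describe \emph{sets} of optimizers rather than unique values, so I should phrase the bounds on $x_i^\ast$ as inequalities on any selected optimizer, not as equalities. Once this is made explicit, the two sum-bound contradictions close the argument.
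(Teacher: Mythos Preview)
Your proposal is correct and follows essentially the same approach as the paper: both argue by contradiction using the case analysis of Lemma~\ref{lemma10} to bound $x_i^\ast$ above by $\phi_i$ when $\lambda^\ast>0$ and below by $\phi_i$ when $\lambda^\ast=0$, then sum and contradict~\eqref{load_demand_supply_constraints}. Your explicit invocation of Lemma~\ref{lemma6} to exclude $\lambda^\ast<0$ is a small clarification the paper leaves implicit.
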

\begin{proof}
	(i) Consider $\sum_{i=1}^{n}\phi_i < C$. By contradiction, suppose $\lambda^\ast>0$. According to \eqref{eq_x_linear}, $\lambda^\ast >0$ yields $x_i^\ast \leq \phi_i$ for $i \in V$, and therefore, $\sum_{i=1}^{n}x_i^\ast \leq \sum_{i=1}^{n}\phi_i$. Since $\sum_{i=1}^{n}\phi_i < C$, we yield
	\begin{equation}
		\sum_{i=1}^{n} x_i^\ast < C,
	\end{equation}
	which contradicts the balancing equality $\sum_{i=1}^n x_i^\ast = C$ in \eqref{load_demand_supply_constraints}. Therefore, it follows that $\lambda^\ast =0$. 
	
	(ii) Consider $\sum_{i=1}^{n}\phi_i > C$. By contradiction, suppose $\lambda^\ast=0$. According to \eqref{eq_x_linear}, $\lambda^\ast =0$ yields $x_i^\ast \geq \phi_i$ for $i \in V$, and therefore, $\sum_{i=1}^{n}x_i^\ast \geq \sum_{i=1}^{n}\phi_i$. Since $\sum_{i=1}^{n}\phi_i > C$, we yield
		\begin{equation}
		\sum_{i=1}^{n} x_i^\ast > C,
	\end{equation}
	which contradicts the balancing equality $\sum_{i=1}^n x_i^\ast = C$ in \eqref{load_demand_supply_constraints}. Therefore, it follows that $\lambda^\ast >0$. 
\end{proof}

\subsubsection{Proof of the Theorem}
Now, we present the proof of Theorem \ref{theorem4}.
	From Fig. \ref{fig_utility_linear}, it is obvious $h(\cdot; \theta_i)$ is concave. Consequently, Proposition~\ref{prop_equilibrium_equivalence} holds. Now, we investigate two cases. 
	
	Case (i) $\phi_{\rm max} < \frac{C}{n}$. In this case, $\phi_i \leq \phi_{\rm max}$ implies $\phi_{i} < \frac{C}{n}$ for $i \in V$, and therefore, $\sum_{i=1}^{n}\phi_{i} < C$. Consequently, Lemma \ref{lemma8} implies $\lambda^\ast = 0$. Since $\lambda^\dag >0$, one obtains $\lambda^\ast < \lambda^\dag$. Therefore, $\lambda^\ast$ is socially resilient. 
	
	Case (ii) $\phi_{\rm max} \geq \frac{C}{n}$ and $\beta_{\rm max} \leq \lambda^\dag$. 
	From Lemma \ref{lemma6}, we know that $\lambda^\ast \geq0$. If $\lambda^\ast =0$, then it is socially resilient. Therefore, we only study the case under which $\lambda^\ast >0$.
	
	According to \eqref{eq_x_linear}, if $\lambda^\ast > \beta_i$ for all $i \in V$, then $x_i^\ast = 0$ for all agents. Therefore, $\sum_{i=1}^{n}x_i^\ast = 0$, which contradicts the balancing equality $\sum_{i=1}^n x_i^\ast = C$ in \eqref{load_demand_supply_constraints} (note that $C>0$). Consequently, we obtain $\lambda^\ast \leq \beta_i$ for at least one agent. Since $\beta_i \leq \beta_{max}$ and $\beta_{max} \leq \lambda^\dag$, we yield $\lambda^\ast \leq \lambda^\dag$. 
	
	Considering  (i) and  (ii), it follows that as long as  $(\beta_{\rm max}, \phi_{\rm max})$ is constrained in the set $\mathscr{S}_{\ast}$ in \eqref{eq_set_3}, $\lambda^\ast$ will be socially resilient.

\subsection{Lemmas for MTES-ST}\label{Appendix_D}

\begin{lemma}\label{lemma5}
Consider the MTES-ST with the quadratic utility function defined in Assumption 1. The optimal load allocation $x_i^\ast$, which is an optimal solution of the optimization problem \eqref{opt_LTD_1}, is obtained as \eqref{eq_x}.
\end{lemma}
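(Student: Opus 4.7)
The plan is to reduce the MTES-ST optimization to the already-solved MTES optimization via the slack-variable technique used in the proof of Proposition~\ref{prop-equivalence}, and then invoke Lemma~\ref{lemma4} directly. Concavity of the quadratic utility lets us apply Proposition~\ref{prop_equilibrium_equivalence}, so the competitive equilibrium exists and coincides with the social welfare equilibrium. It is also known that $\lambda^\ast \geq 0$ for MTES-ST, so the analysis splits naturally into the two cases $\lambda^\ast > 0$ and $\lambda^\ast = 0$.

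For $\lambda^\ast > 0$, I would introduce a slack variable $s_i \geq 0$ with $x_i + e_i + s_i = a_i$, substitute $e_i = a_i - x_i - s_i$ into the objective of \eqref{opt_LTD_1}, and observe that the resulting objective is strictly decreasing in $s_i$ because $\lambda^\ast > 0$. This forces $s_i^\ast = 0$, so the trading constraint is tight and $e_i^\ast = a_i - x_i^\ast$. The remaining optimization over $x_i$ is then identical to the MTES problem \eqref{opt_LD_1}, and Lemma~\ref{lemma4} immediately delivers $x_i^\ast = \max\{m_i - \lambda^\ast/b_i, 0\}$.

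For $\lambda^\ast = 0$, the slack-variable monotonicity argument no longer pins down $s_i^\ast$, so this boundary case must be handled by direct calculation. The payoff reduces to $h(x_i;\theta_i)$ alone, and the trading variable $e_i$ can freely absorb any slack in $x_i + e_i \leq a_i$ (by taking $e_i$ sufficiently negative if needed), so maximizing over $x_i \geq 0$ is equivalent to the unconstrained maximization of $-\tfrac{1}{2}b_i x_i^2 + m_i b_i x_i$. Since $m_i > 0$ by Assumption~1, the unconstrained maximizer $x_i = m_i$ is feasible, and coincides with $\max\{m_i - 0/b_i, 0\}$. The main obstacle, though mild, is precisely this boundary case: the coupling between $x_i$ and $e_i$ through the inequality constraint is no longer automatically tight, so one must verify separately that $e_i$ does not impose any effective restriction on $x_i^\ast$. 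Once that is checked, both branches collapse into the single closed-form \eqref{eq_x}.
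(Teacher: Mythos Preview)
Your proposal is correct and follows essentially the same route as the paper: split into the cases $\lambda^\ast>0$ and $\lambda^\ast=0$, use the slack-variable reduction (equivalently, invoke Proposition~\ref{prop-equivalence}) to identify the $x_i$-problem with the MTES problem and apply Lemma~\ref{lemma4} in the former case, and directly compute $x_i^\ast=m_i$ from the unconstrained quadratic in the latter. The paper packages the $\lambda^\ast>0$ step as a one-line appeal to Proposition~\ref{prop-equivalence} rather than re-deriving the slack argument, but the content is the same.
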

\begin{proof}
	Expanding the optimization problem in \eqref{opt_LTD_4} results in
	\begin{equation}\label{opt_LTD_2}
		\begin{aligned}
			\max_{{x}_i, s_i} \quad &  -\frac{1}{2}b_ix_i^2+ (m_i b_i - \lambda^\ast)x_i +\lambda^\ast a_i  - \lambda^\ast s_i \\
			{\rm s.t.}	\quad & x_i, s_i \in \mathbb{R}^{\geq0}.
		\end{aligned}
	\end{equation}
	Recall that for MTES-ST we have $\lambda^\ast \geq 0$ \cite{CDC}. Therefore, we can consider two cases.
		
	Case (i) $\lambda^ \ast =0$. In this case, \eqref{opt_LTD_2} yields $x_i^\ast = m_i$.
		
	Case (ii) $\lambda^ \ast >0$. According to Proposition~\ref{prop-equivalence}, both MTES-ST and MTES have the same load allocation decisions. Therefore, based on Lemma \ref{lemma4} we yield \eqref{eq_x} is valid when $\lambda^\ast >0$. 
		
	Finally, considering (i) and (ii), it follows that \eqref{eq_x} holds all the time for MTES-ST.
\end{proof}

\begin{lemma}\label{lemma2}
 Consider the MTES-ST with the quadratic utility function defined in Assumption 1. If $\sum_{i=1}^{n}m_i \leq C$,  then $\lambda^\ast =0$. Conversely, if $\sum_{i=1}^{n}m_i > C$ then $\lambda^\ast >0$.
\end{lemma}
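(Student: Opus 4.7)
The plan is to prove the two directions separately by contradiction, combining Lemma 5 with the two different balance structures of MTES-ST (the $\sum x_i^\ast = C$ balance when the trade constraint is binding, and the $\sum e_i^\ast = 0$ balance always). Throughout, I would use that $\lambda^\ast \geq 0$ for MTES-ST (as recalled in the proof of Lemma 5 and established in \cite{CDC}), and that Assumption 1 guarantees $m_i > 0$ and $b_i > 0$ for every agent.

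For the forward direction, assume $\sum_{i=1}^n m_i \leq C$ and suppose toward contradiction that $\lambda^\ast > 0$. Then I would invoke the argument in the proof of Proposition~2 (Appendix \ref{Appendix_A1}): since $\lambda^\ast > 0$, the slack variable $s_i^\ast$ in \eqref{opt_LTD_4} must vanish, so $x_i^\ast + e_i^\ast = a_i$, and summing together with the balance \eqref{trading_demand_supply_constraints} yields $\sum_{i=1}^n x_i^\ast = C$. On the other hand, Lemma \ref{lemma5} gives $x_i^\ast = \max\{m_i - \lambda^\ast/b_i, 0\}$, and since $\lambda^\ast/b_i > 0$ and $m_i > 0$, a case check shows $x_i^\ast < m_i$ strictly for every $i$. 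Summing gives $C = \sum_{i=1}^n x_i^\ast < \sum_{i=1}^n m_i \leq C$, a contradiction. Combined with $\lambda^\ast \geq 0$, this forces $\lambda^\ast = 0$.

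For the converse, assume $\sum_{i=1}^n m_i > C$ and suppose toward contradiction that $\lambda^\ast = 0$. Here Lemma~\ref{lemma5} specialises to $x_i^\ast = m_i$ for every agent. Now I would use the network balance \eqref{trading_demand_supply_constraints} together with the feasibility constraint $x_i^\ast + e_i^\ast \leq a_i$, which under $x_i^\ast = m_i$ becomes $e_i^\ast \leq a_i - m_i$. Summing yields
\begin{equation*}
0 = \sum_{i=1}^n e_i^\ast \leq \sum_{i=1}^n a_i - \sum_{i=1}^n m_i = C - \sum_{i=1}^n m_i < 0,
\end{equation*}
a contradiction, so $\lambda^\ast > 0$.

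The main subtlety, and the point requiring the most care, is that the two parts rest on structurally different balance identities for MTES-ST: the $\sum x_i^\ast = C$ identity is only guaranteed when $\lambda^\ast > 0$ (it comes from the slack being active), whereas the $\sum e_i^\ast = 0$ identity is an intrinsic constraint that holds always. One therefore cannot directly port the MTES argument from Lemma~\ref{lemma1}; instead the forward direction must be driven by $\sum x_i^\ast = C$ (which is available precisely under the assumption we are trying to contradict), while the converse must be driven by $\sum e_i^\ast = 0$. Apart from this bookkeeping, the argument is a clean strict-inequality contradiction in each case, hinging on the positivity of $m_i, b_i$ guaranteed by Assumption~1.
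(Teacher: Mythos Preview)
Your proof is correct and follows essentially the same approach as the paper: both argue by contradiction in each direction, using that $\lambda^\ast>0$ forces $\sum x_i^\ast=C$ (via the active slack) together with $x_i^\ast<m_i$ from Lemma~\ref{lemma5} for part~(i), and that $\lambda^\ast=0$ gives $x_i^\ast=m_i$ combined with the feasibility constraint summed against $\sum e_i^\ast=0$ for part~(ii). The only cosmetic difference is that in part~(ii) the paper first records $\sum x_i^\ast\leq C$ and then contradicts it, whereas you sum $e_i^\ast\leq a_i-m_i$ directly; these are the same inequality rearranged.
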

\begin{proof}
As mentioned before, for MTES-ST we have $\lambda^\ast \geq 0$ \cite{CDC}. We investigate two cases.

(i) Consider $\sum_{i=1}^{n}m_i \leq C$. By contradiction, suppose $\lambda^\ast >0$. According to \eqref{eq_x}, $\lambda^\ast>0$ yields $x_i^\ast < m_i$ for $i \in V$, and therefore, $\sum_{i=1}^{n}x_i^\ast < \sum_{i=1}^{n}m_i$. Since $\sum_{i=1}^{n}m_i \leq C$, we yield $\sum_{i=1}^{n} x_i^\ast < C$, 
which contradicts the equality in \eqref{eq23}. Therefore, it follows that $\lambda^\ast =0$.

(ii) Consider $\sum_{i=1}^{n}m_i > C$. The inequality constraint $x_i + e_i \leq a_i$ in \eqref{opt_LTD_1} yields
\begin{equation}\label{eq18}
	\sum_{i=1}^{n}x_i^\ast \leq C.
\end{equation}
By contradiction, suppose $\lambda^\ast=0$. According to \eqref{eq_x}, we obtain $x_i^\ast = m_i$, and therefore, $\sum_{i=1}^{n}x_i^\ast = \sum_{i=1}^{n}m_i$. Since $\sum_{i=1}^{n}m_i > C$, we yield $\sum_{i=1}^{n} x_i^\ast > C$, which contradicts the inequality in \eqref{eq18}. Therefore, it follows that $\lambda^\ast >0$. 
\end{proof}

\begin{lemma}\label{lemma11}
Consider the MTES-ST with the piece-wise linear utility function defined in Assumption 2. The optimal load allocation $x_i^\ast$, which is an optimal solution of the optimization problem \eqref{opt_LTD_1}, satisfies \eqref{eq_x_linear}.
\end{lemma}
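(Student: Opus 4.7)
The plan is to mirror the proof of Lemma~\ref{lemma5}, adapting the linear-quadratic argument to the piece-wise linear setting. I would first invoke Lemma~\ref{lemma6} to record that $\lambda^\ast \geq 0$ holds automatically for MTES-ST under Assumption~2, which cleanly splits the analysis into the subcases $\lambda^\ast = 0$ and $\lambda^\ast > 0$.

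Next, I would apply the slack-variable transformation used in the proof of Proposition~\ref{prop-equivalence}: rewrite $x_i + e_i \leq a_i$ as $x_i + e_i + s_i = a_i$ with $s_i \geq 0$, and substitute $e_i = a_i - x_i - s_i$ back into the objective. This reduces the individual problem \eqref{opt_LTD_1} to
\begin{equation*}
\max_{x_i,\, s_i \in \mathbb{R}^{\geq 0}}\; h(x_i;\theta_i) + \lambda^\ast (a_i - x_i - s_i).
\end{equation*}
When $\lambda^\ast > 0$, the term $-\lambda^\ast s_i$ is strictly decreasing in $s_i$, so $s_i^\ast = 0$ and Proposition~\ref{prop-equivalence} then forces the load decisions of MTES-ST to coincide with those of MTES. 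Lemma~\ref{lemma10} therefore delivers the last three branches of \eqref{eq_x_linear} directly.

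It remains to handle $\lambda^\ast = 0$. In this case the objective collapses to $h(x_i;\theta_i) = \min\{\beta_i x_i, \phi_i \beta_i\}$, with $s_i$ dropping out entirely. Since this piece-wise linear utility is strictly increasing on $[0,\phi_i]$ and constant on $[\phi_i,\infty)$, the set of maximizers is exactly $\{x_i : x_i \geq \phi_i\}$, which matches the first branch of \eqref{eq_x_linear}. Combining the two subcases yields the full characterization.

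The routine nature of the reduction means there is no serious obstacle; the one conceptual subtlety is that when $\lambda^\ast = 0$ the strategic-trade income vanishes so $e_i$ is effectively unconstrained in sign (and $s_i$ can absorb any surplus), which is precisely what permits $x_i^\ast$ to range over the whole flat region $[\phi_i,\infty)$ rather than collapse to a single point as in the strictly concave quadratic case of Lemma~\ref{lemma5}.
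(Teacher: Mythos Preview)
Your proof is correct and follows essentially the same route as the paper's: the slack-variable reduction, the case split on $\lambda^\ast = 0$ versus $\lambda^\ast > 0$, the appeal to Proposition~\ref{prop-equivalence} and Lemma~\ref{lemma10} for the positive-price branch, and the direct analysis of the flat utility for the zero-price branch all match. One small correction: Lemma~\ref{lemma6} is stated for MTES, not MTES-ST; the paper instead cites \cite{CDC} for the nonnegativity of $\lambda^\ast$ in the strategic-trading setting.
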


\begin{proof}
	The inequality constraint $x_i + e_i \leq a_i$ in the optimization problem \eqref{opt_LTD_1} can be written as 
	\begin{equation}\label{eq16}
		x_i + e_i + s_i = a_i,
	\end{equation}
	where $s_i\geq 0$ is the slack variable. Substituting $e_i = a_i - x_i - s_i$ into \eqref{opt_LTD_1} yields an equivalent form for the optimization problem as
	\begin{equation}\label{opt_LTD_6}
		\begin{aligned}
			\max_{{x}_i, s_i} \quad &  \min \{\beta_i x_i, \phi_i \beta_i\}+ \lambda^\ast (a_i  -  x_i -  s_i) \\
			{\rm s.t.}	\quad & x_i, s_i \in \mathbb{R}^{\geq0}.
		\end{aligned}
	\end{equation}
	Recall that for MTES-ST, we have $\lambda^\ast \geq 0$ \cite{CDC}.
	We investigate two cases.
  
    Case (i) $\lambda^\ast = 0$. Similar to the proof of Lemma \ref{lemma10}, case (i), the objective function in \eqref{opt_LTD_6} is strictly increasing in the interval $x_i \in [0, \phi_i]$, while constant in the interval $x_i \in [\phi_i, \infty)$. Therefore, the optimal solution is achieved at $x_i^\ast \geq \phi_i$.
	
	Case (ii) $\lambda^\ast > 0$. It is proved in Proposition~\ref{prop-equivalence} that for $\lambda^\ast >0$, both MTES-ST and MTES yield the same results. Therefore, this part of the proof is the same as cases (ii), (iii), and (iv) in the proof of Lemma \ref{lemma10}.	
\end{proof}

\begin{lemma}\label{lemma9}
	Consider the MTES-ST with the piece-wise linear utility function defined in Assumption 2. If $\sum_{i=1}^{n}\phi_i < C$, then $\lambda^\ast = 0$. Conversely, if $\sum_{i=1}^{n}\phi_i > C$, then $\lambda^\ast > 0$.
\end{lemma}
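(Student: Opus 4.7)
\subsection*{Proof proposal for Lemma \ref{lemma9}}

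The plan is to mimic the structure of Lemma \ref{lemma2} (the analogous statement for the quadratic case under MTES-ST), making use of Lemma \ref{lemma11} as the workhorse that identifies the shape of $x_i^\ast$ in terms of $\lambda^\ast$, together with the basic feasibility structure of MTES-ST. The starting point is the fact recalled from \cite{CDC} that $\lambda^\ast \geq 0$ for MTES-ST, so in each of the two cases it suffices to rule out one sign by contradiction.

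For part (i), assume $\sum_{i=1}^n \phi_i < C$ and suppose, for contradiction, $\lambda^\ast > 0$. By Lemma \ref{lemma11} (cases $0 < \lambda^\ast < \beta_i$, $\lambda^\ast = \beta_i$, and $\lambda^\ast > \beta_i$), every agent satisfies $x_i^\ast \leq \phi_i$, so $\sum_{i=1}^n x_i^\ast \leq \sum_{i=1}^n \phi_i < C$. On the other hand, when $\lambda^\ast > 0$, the slack argument in the proof of Proposition \ref{prop-equivalence} (Appendix \ref{Appendix_A1}) forces $s_i^\ast = 0$ for all $i$, so $x_i^\ast + e_i^\ast = a_i$, and summing this identity together with the balance $\sum_i e_i^\ast = 0$ yields $\sum_i x_i^\ast = C$. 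This contradicts the strict inequality above, establishing $\lambda^\ast = 0$.

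For part (ii), assume $\sum_{i=1}^n \phi_i > C$ and suppose, for contradiction, $\lambda^\ast = 0$. Then case~(i) of Lemma \ref{lemma11} gives $x_i^\ast \geq \phi_i$, hence $\sum_i x_i^\ast \geq \sum_i \phi_i > C$. However, the constraint $x_i^\ast + e_i^\ast \leq a_i$ in \eqref{opt_LTD_1}, summed over $i$ and combined with the balance $\sum_i e_i^\ast = 0$, implies $\sum_i x_i^\ast \leq \sum_i a_i = C$, which is the counterpart of \eqref{eq18} that remains valid regardless of the sign of $\lambda^\ast$. This yields the required contradiction and forces $\lambda^\ast > 0$.

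The routine aspects are the two chains of inequalities; the only subtle point, and the main step that needs to be handled with care, is part (ii): when $\lambda^\ast = 0$, the slack variable $s_i$ is not necessarily zero and the individual optimizers $x_i^\ast$ are not uniquely determined by Lemma \ref{lemma11}, so one cannot pin $x_i^\ast$ down to a single value. The argument therefore has to rely on the weaker but universally valid feasibility inequality $\sum_i x_i^\ast \leq C$, obtained directly from $x_i + e_i \leq a_i$ and $\sum_i e_i = 0$ without invoking complementary slackness, which is exactly the observation that makes the contradiction go through.
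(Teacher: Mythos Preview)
Your proposal is correct and follows essentially the same approach as the paper: both parts proceed by contradiction using $\lambda^\ast \geq 0$, Lemma~\ref{lemma11} to bound $x_i^\ast$ relative to $\phi_i$, the slack argument to force $\sum_i x_i^\ast = C$ when $\lambda^\ast > 0$, and the raw feasibility inequality $\sum_i x_i^\ast \leq C$ when $\lambda^\ast = 0$. Your explicit remark about why part~(ii) must rely on the feasibility inequality rather than complementary slackness is a nice clarification that the paper leaves implicit.
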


\begin{proof}
	It is known that $\lambda^\ast \geq 0$ \cite{CDC}. We investigate two cases.
	
	(i) Consider $\sum_{i=1}^{n}\phi_i < C$. By contradiction, if $\lambda^\ast >0$, then the objective function in \eqref{opt_LTD_6} is strictly decreasing with respect to $s_i$. Consequently, in order for the objective function to be maximized,  $s_i$ must be minimized, i.e.,  $s_i^\ast =0$. Following from \eqref{eq16}, we obtain $x_i^\ast + e_i^\ast = a_i$. Then, considering $\sum_{i=1}^{n} e_i^\ast = 0$ in \eqref{trading_demand_supply_constraints}, we yield
	\begin{equation}\label{eq14}
		\sum_{i=1}^{n} x_i^\ast = C,
	\end{equation}
	where $C= \sum_{i=1}^{n} a_i$. According to \eqref{eq_x_linear}, when $\lambda^\ast>0$ then $x_i^\ast \leq \phi_i$. Therefore, $\sum_{i=1}^{n} x_i^\ast \leq \sum_{i=1}^{n}\phi_i$. Since $\sum_{i=1}^{n}\phi_i < C$, we yield $\sum_{i=1}^{n} x_i^\ast < C$, 
	which contradicts the equality in \eqref{eq14}. Therefore, it follows that $\lambda^\ast =0$.
	
	(ii) Consider $\sum_{i=1}^{n}\phi_i > C$. The inequality constraint $x_i + e_i \leq a_i$ in \eqref{opt_LTD_1} yields
	\begin{equation}\label{eq17}
	 \sum_{i=1}^{n}x_i^\ast \leq C.
	\end{equation}
	By contradiction, suppose $\lambda^\ast=0$. According to \eqref{eq_x_linear}, we obtain $x_i^\ast \geq \phi_i$, and therefore, $\sum_{i=1}^{n}x_i^\ast \geq \sum_{i=1}^{n}\phi_i$. Since $\sum_{i=1}^{n}\phi_i > C$, we yield $\sum_{i=1}^{n} x_i^\ast > C$, which contradicts the inequality in \eqref{eq17}. Therefore, it follows that $\lambda^\ast >0$. 
\end{proof}

\end{document}